

\documentclass[preprint,authoryear,review,12pt]{elsarticle}
\usepackage{lineno,hyperref} 
\modulolinenumbers[5]

\journal{Biosystems Engineering}

\usepackage{amsmath}
\usepackage{amssymb} 

\usepackage{tikz}
\usetikzlibrary{arrows,matrix,positioning,patterns}
\usetikzlibrary{calc}
\usepackage{pgfplots} 
\usetikzlibrary{external}
\usetikzlibrary{plotmarks}


\newdefinition{rmk}{Remark}
\newproof{pf}{Proof}
\newproof{pot}{Proof of Theorem \ref{thm2}}

\usepackage{algorithm} 
\usepackage{algpseudocode}

\usepackage{amsthm}

\newtheorem{proposition}{Proposition}

\newtheorem{remark}{Remark}

\usepackage{amsmath}
\usepackage{amssymb} 

\usepackage{eurosym}
\usepackage{colortbl}

\usetikzlibrary{arrows.meta,calc,decorations.markings,math,arrows.meta}

\usepackage[labelformat=simple]{subcaption}
\DeclareCaptionLabelSeparator{periodspace}{.\quad}
\captionsetup{font=footnotesize,labelsep=periodspace,singlelinecheck=false}
\captionsetup[sub]{font=footnotesize,singlelinecheck=true}


\usepackage{color,framed}
\usepackage[utf8]{inputenc} 
\usepackage{tabularx}

\usepackage{rotating}
\usepackage{graphicx}
\usepackage{lscape}
\allowdisplaybreaks 
\usepackage{longtable}

\usepackage[utf8]{inputenc}
\usepackage{pgfplots}
\usepgfplotslibrary{groupplots} 
\pgfplotsset{compat=1.3}

\begin{document}

\begin{frontmatter}
\title{Partial Field Coverage Based on Two Path Planning Patterns}

\author[label1]{Mogens Graf Plessen\corref{cor1}}
\ead{mogens.plessen@imtlucca.it}
\address[label1]{IMT School for Advanced Studies Lucca, Piazza S. Francesco, 19-55100 Lucca, Italy}
\cortext[cor1]{Corresponding author.}

\begin{abstract}
This paper presents a path planning method for partial field coverage. Therefore, a specific path planning pattern is proposed. The notion is that lighter machinery with smaller storage tanks can alleviate soil compaction because of a reduced weight, but does not enable full field coverage in a single run because of the smaller storage capacity. This is relevant for spraying applications and related in-field work. Consequently, multiple returns to a mobile or stationary depot located outside of the field are required for storage tank refilling. Therefore, a suitable path planning method is suggested that accounts for the limited turning radii of agricultural vehicles, satisfies compacted area minimisation constraints, and aims at overall path length minimisation. The benefits of the proposed method are illustrated by means of a comparison to a planning method based on the more common AB pattern. It is illustrated how the proposed path planning pattern can also be employed efficiently for single-run field coverage. 
\end{abstract} 
\begin{keyword}
Partial Field Coverage; Path Planning; Shortest Paths; Patterns; Decision Support System.
\end{keyword}
\end{frontmatter}


 
\begin{small}
\fcolorbox{black}{white}{
\parbox{0.8\textwidth}{ 
\begin{tabular}{ll}
\multicolumn{2}{c}{NOMENCLATURE}\\
\multicolumn{2}{l}{Symbols}\\
$D^{(\rho)}$ & Total in-field path length for $\rho\geq 1$ field runs (m)\\
$\Delta D$ & Path length difference (m)\\
$e_{i,j}$ & Edge connecting nodes $i$ and $j$ (m)\\
$f(t)$ & Storage tank fill-level ($\%$)\\
$\gamma(t)$ & Vehicle state (resume, coverage, return)\\
$H_0$ & Nominal lane path length in a rectangular field (m)\\
$N$ & Number of interior lanes (-)\\
$q_l$, $p$ & Auxiliary variables in the example of Section \ref{subsec_parametricEx} (-)\\
$\rho$ & Number of field runs required for field coverage (-)\\
$R$ & Vehicle turning radius (m)\\
$W_0$ & Nominal machine operating width (m)\\
$Z$ & Position $Z=(\xi,\eta)$ (m)\\
$(x,y)$ & Position in the global coordinate system (m)\\
$(\xi,\eta)$ & Position in the normalised coordinate system (m)\\
$Z_0$ & Start position $Z_{0}=(\xi_0,\eta_0)$ (m)\\
$\mathcal{Z}_0^{(l)}$ & Two sets of start positions with $l=1,2$ (m)\\
$Z_i$ & Position of node $i$ (m)\\
$Z(t)$ & Position of agricultural vehicle at time $t$ (m)\\ 
$Z(\tau^\text{last})$ & Position for resuming field coverage (m)\\ 
[3pt] 
\multicolumn{2}{l}{Abbreviations}\\
ABp & Path Planning Method 1 (AB pattern)\\
CIRC & Path Planning Method 2 (circular pattern)\\
CIRC$^\star$ & Path Planning Method 3 (circular pattern)\\
\end{tabular}
}}
\end{small}

\section{Introduction\label{sec_intro}}

According to \cite{ahumada2009application} and \cite{bochtis2010machinery}, the agri-food supply chain can be decomposed into four main functional areas: production, harvesting, storage and distribution. For improved supply chain efficiency, logistical optimisation and route planning play an important role  in all of the four functional areas. Regarding production, for example, by means of minimisation of the non-working distance travelled by machines operating in the headland field according to \cite{bochtis2008minimising}, optimal route planning based on B-patterns according to \cite{bochtis2013benefits}, or route planning for the coordination of fleets of autonomous vehicles as discussed in \cite{conesa2016route} and \cite{seyyedhasani2017using}. See also~\cite{day2011engineering} for an overview of means for efficiency improvements,~\cite{bochtis2013satellite} for the importance of satellite-based navigation systems in modern agriculture, and  \cite{sorensen2010conceptual} for a distinction between in-field, inter-field, inter-sector and inter-regional logistics. The path planning method for partial field coverage presented in this paper relates to the first functional area of the agri-food supply chain. 

The last decades have witnessed a trend towards the employment of larger and more powerful machines in agriculture. This trend is expected to further continue in the near future, see \cite{kutzbach2000trends} and \cite{dain2013risk}. Among the main benefits are higher work rates. The drawbacks include increased soil compaction due to machinery weights, see~\cite{raper2005agricultural} and \cite{hamza2005soil}. See also \cite{antille2013soil} for the influence of tyre sizes on soil compaction. Concurrently to this ongoing trend, there are alternative considerations about the replacement of heavy machinery by teams of smaller and lighter autonomous robots to mitigate soil compaction, see \cite{blackmore2008specification}, \cite{bochtis2010vehicle}, \cite{bochtis2012dss}, \cite{bochtis2013satellite}, \cite{gonzalez2016fleets} and \cite{seyyedhasani2017using}. See also \cite{vougioukas2012distributed} for a method for motion coordination of teams of autonomous agricultural vehicles.

This paper is motivated by the concept of smaller in-field operating machines collaborating with out-field support units (mobile depots). Therefore, a pattern-based path planning method for partial field coverage is presented, which is characterised by i) minimisation of traveled non-working path length, and ii) compliance with compacted area minimisation constraints. The latter implies driving along unique and established transitions between headland path and interior lanes, thereby avoiding the creation of any additional tyre traces that result from vehicle traffic passing over crops and compacting soil. Under the assumption of specific field shapes two different path planning patterns are compared.

In contrast to route planning methods such as in \cite{conesa2016mix} for the in-field operation of a fleet of vehicles, the presented method focuses on the in-field operation of a single vehicle that is repeatedly returning to the field entrance for refilling. This is primarily motivated by the targeted crops (wheat, rapeseed and barley) and the costs of corresponding agricultural vehicles. A support unit, acting as a mobile depot, is assumed to be waiting at the field entrance for refilling. Two comments are made. First, unlike during harvest, mobile units for refilling of spraying tanks cannot come to any arbitrary position along the headland. Second, a single field entrance is in line with the objective of compacted area minimisation. For aforementioned targeted crops, any new field entrance would result in a new compacted area for the connection of in-field headland path and out-field road network, see Fig. \ref{fig_problFormulation_cropped}.

This paper is organised as follows. The problem is formulated in Section \ref{sec_problFormulation}. The main contribution is given in Section \ref{sec_PartialFieldCoverage}. Examples and a discussion are presented in Sections \ref{sec_IllustrativeEx} and \ref{sec_discussion}, before concluding with Section \ref{sec_conclusion}.

\section{Problem Formulation and Notation\label{sec_problFormulation}}

\subsection{Problem Formulation}

\begin{figure}
\vspace{0.1cm}
\centering
\includegraphics[width=9cm]{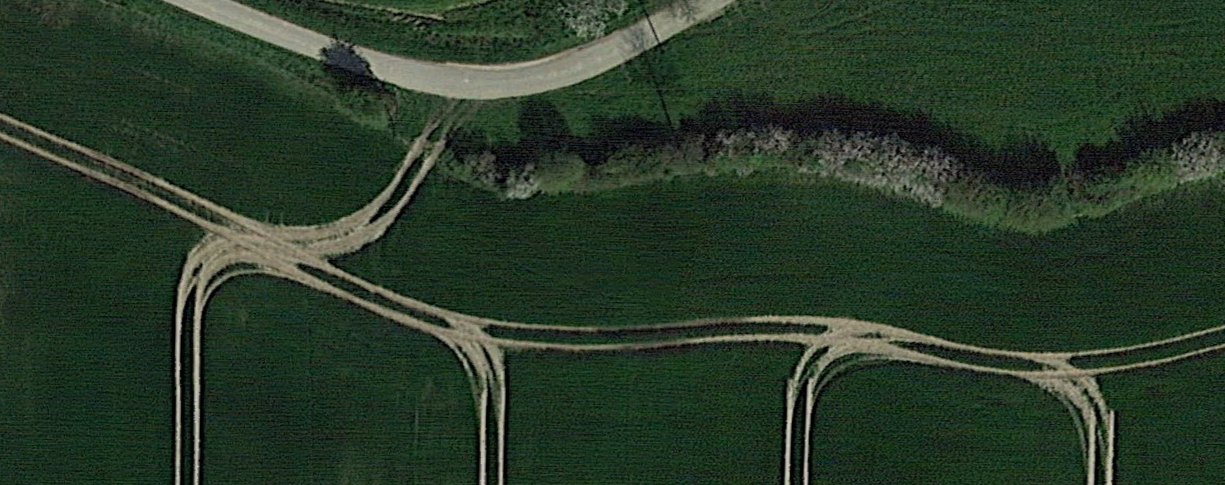}
\caption{Visualisation of real-world transitions between headland path and interior lanes. In the satellite image, the effects of a limited turning radius of the employed agricultural vehicle is visible. The compacted areas are indicated by the bright tyre traces. Note also the compacted area due to the connection between field entrance and headland path.}
\label{fig_problFormulation_cropped}
\end{figure}

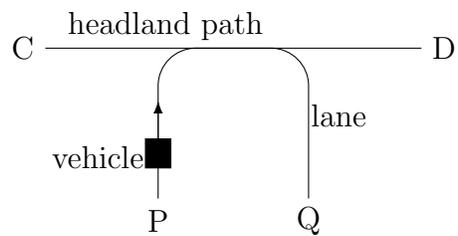
\begin{figure}[t]
\centering
\begin{tikzpicture}
\draw [black] plot [rounded corners=0.5cm] coordinates { (1,0)(1,2)(3,2)(3,0)};
\draw [black] plot [rounded corners=0.5cm] coordinates { (-0.5,2)(4.5,2)};
%
%
\draw [draw=black,draw opacity=1, line width=10pt] plot [rounded corners=0.25cm] coordinates { (1,0.4)(1,0.8)};
\draw [black,-{Latex[scale=1.0]}] plot [rounded corners=0.25cm] coordinates { (1,0.8)(1,1.3)};
\node[color=black] (a) at (3.4, 1.1) {lane};
\node[color=black] (a) at (0.2, 0.55) {vehicle};
\node[color=black] (a) at (1.1, 2.3) {headland path};
\node[color=black] (a) at (1, -0.3) {P};
\node[color=black] (a) at (3, -0.3) {Q};
\node[color=black] (a) at (-0.8, 2) {C};
\node[color=black] (a) at (4.8, 2) {D};
\end{tikzpicture}
\caption{Visualisation of the \emph{compacted area minimisation constraint}.}
\label{fig_Def_LaneHeadlLane}
\end{figure}

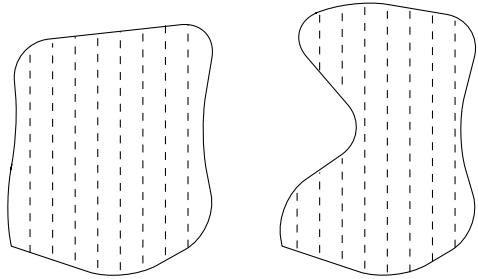
\begin{figure}
\centering
\begin{tikzpicture}
\draw[dashed] (0.25, -0.09) -- (0.25, 2.66);
\draw[dashed] (0.55, -0.1) -- (0.55, 2.7);
\draw[dashed] (0.85, -0.19) -- (0.85, 2.79);
\draw[dashed] (1.15, -0.29) -- (1.15, 2.82);
\draw[dashed] (1.45, -0.32) -- (1.45, 2.86);
\draw[dashed] (1.75, -0.23) -- (1.75, 2.9);
\draw[dashed] (2.05, -0.12) -- (2.05, 2.96);
\draw[dashed] (2.35, -0.02) -- (2.35, 2.95);
\draw[solid,rounded corners=0.5cm] (0, 0) -- (1.5, -0.5) -- (2.75,0.2)--(2.5,1.5)--(2.75,3)--(0,2.7)--(0.1,1.5)--(-0.1,0.6)--(0,0);
\draw[dashed] (3.8, -0.03) -- (3.8, 0.81);
\draw[dashed] (4.1, -0.09) -- (4.1, 0.98);
\draw[dashed] (4.4, -0.17) -- (4.4, 1.17);
\draw[dashed] (4.1, 2.34) -- (4.1, 3.1);
\draw[dashed] (4.4, 2.15) -- (4.4, 3.2);
\draw[dashed] (4.7, -0.3) -- (4.7, 3.2);
\draw[dashed] (5, -0.35) -- (5, 3.2);
\draw[dashed] (5.3, -0.35) -- (5.3, 3.2);
\draw[dashed] (5.6, -0.2) -- (5.6, 3.1);
\draw[dashed] (5.9, -0.02) -- (5.9, 3);
\draw[solid,rounded corners=0.5cm] (3.6, 0) -- (5.1, -0.5) -- (6.28,0.2)--(5.9,1.5)--(6.28,3)--(4.5,3.3)--(3.6,2.9)--(4.8,1.5)--(3.5,0.6)--(3.6,0);
%
\end{tikzpicture}
\caption{Illustration of two types of field shapes with uninterrupted (left) and interruped (right) lanes when aligned in a rotated coordinate frame. Note that field areas do not necessarily have to be convex. For the present paper, the focus is on the left field type.}
\label{fig_2FieldShapes}
\end{figure}

\begin{figure}[t]
\centering
\begin{tikzpicture}
\draw[dashed] (0, 0) -- (0, 4);
\draw[draw=black,fill=black] (0,0) circle (1pt); 
\draw[draw=black,fill=black] (0,4) circle (1pt); 
\draw[dashed] (0.8, 0) -- (0.8, 4);
\draw[draw=black,fill=black] (0.8,0) circle (1pt); 
\draw[draw=black,fill=black] (0.8,4) circle (1pt); 
\draw[dashed] (1.6, 0) -- (1.6, 4);
\draw[draw=black,fill=black] (1.6,0) circle (1pt); 
\draw[draw=black,fill=black] (1.6,4) circle (1pt); 
\draw[dashed] (2.4, 0) -- (2.4, 4);
\draw[draw=black,fill=black] (2.4,0) circle (1pt); 
\draw[draw=black,fill=black] (2.4,4) circle (1pt); 
\draw[dashed] (3.2, 0) -- (3.2, 4);
\draw[draw=black,fill=black] (3.2,0) circle (1pt); 
\draw[draw=black,fill=black] (3.2,4) circle (1pt); 
\draw[dashed] (4.0, 0) -- (4.0, 4);
\draw[draw=black,fill=black] (4.0,0) circle (1pt); 
\draw[draw=black,fill=black] (4.0,4) circle (1pt); 
\draw[dashed] (4.8, 0) -- (4.8, 4);
\draw[draw=black,fill=black] (4.8,0) circle (1pt); 
\draw[draw=black,fill=black] (4.8,4) circle (1pt); 
\draw[draw=black,fill=black] (-0.8,2) circle (1pt); 
\draw[draw=black,fill=black] (5.6,2) circle (1pt); 
\node[color=black] (a) at (-1.3, 2) {$Z_{2N+1}$};
\node[color=black] (a) at (6.15, 2) {$Z_{2N+2}$};
\draw [black,-{Latex[scale=1.0]}] plot [rounded corners=0.25cm] coordinates { (0.4,4)(-0.8,4)(-0.8,0)(5.6,0)(5.6,4)(-0.8,4)(-0.8,0.75)};
\node[color=black] (a) at (0, -0.35) {$Z_1$};
\node[color=black] (a) at (0.8, -0.35) {$Z_2$};
\node[color=black] (a) at (4.8, -0.35) {$Z_N$};
\node[color=black] (a) at (0, 4.3) {$Z_{N+1}$};
\node[color=black] (a) at (4.8, 4.3) {$Z_{2N}$};
\draw[draw=black,fill=black] (1.2,4) circle (1pt); 
\node[color=black] (a) at (1.2, 4.3) {$Z_{0}$};
\draw [black,-{Latex[scale=1.0]}] plot [rounded corners=0.25cm] coordinates { (-3,0)(-2,0)};
\draw [black,-{Latex[scale=1.0]}] plot [rounded corners=0.25cm] coordinates { (-2.7,-0.3)(-2.7,0.7)};
\node[color=black] (a) at (-1.85, -0.3) {$\xi$};
\node[color=black] (a) at (-3.1, 0.7) {$\eta$};
%
\end{tikzpicture}
\caption{Illustration of notation. The headland path is denoted by the solid line. Interior lanes are here indicated by the dashed lines. The arrow indicates the headland traversal direction. Headland and lanes are expressed in the $(\xi,\eta)$-coordinate system such that lanes are aligned to the $\eta$-axis. The field entrance (start position) is denoted by $Z_0$.}
\label{fig_IllustrNotation}
\end{figure}
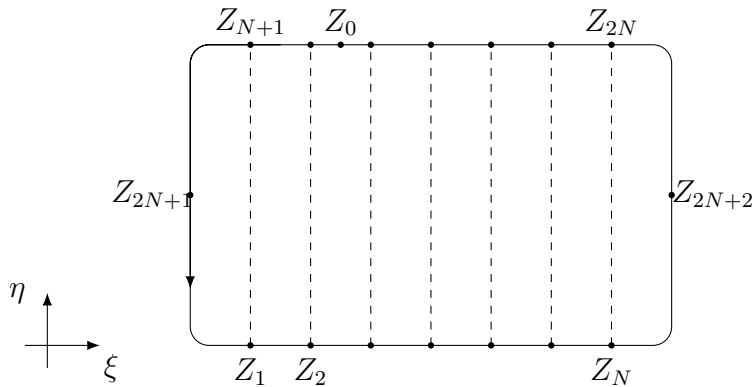

This paper addresses pattern-based path planning for partial field coverage. The fundamental objective is non-working path length minimisation. Therefore, the following is additionally addressed.

First, path planning must account for \emph{compacted area minimisation constraints}. These constraints impose unique transitions between headland and interior lanes and account for limited turning radii. For illustration see Fig. \ref{fig_Def_LaneHeadlLane}. Any agricultural vehicle that is traveling along lanes and the headland path must respect tractor traces established upon first field coverage. Thus, transitions P-Q and P-D are admissible. In contrast, transition P-C is not admissible.  Such a transition would deviate from established tyre traces when accounting for the limited turning radius of the vehicle, and would therefore repress and destroy precious crop. See also \cite{plessen2016shortest}~for general shortest path in-field navigation accounting for these constraints. 

Second, path planning must ideally minimise non-working path length for \emph{both} single-run and partial field coverage. 

Third, path planning must optimally account for the following tasks during online operation: i) path following according to a field coverage plan, ii) navigation from a position along the path network to the field entrance for refilling of storage tanks, and iii) navigation from the field entrance after refilling back to the position along the field coverage path for the resumption of work.

\subsection{Assumptions on Field Shapes and Notation}

The focus of this paper is on field shapes that permit optimal path planning based on patterns, see Fig. \ref{fig_2FieldShapes}. As will be shown, for these field shapes the preferred pattern-based path planning method can yield minimal path length solutions for both full and partial field coverage. Relevant components for planning include a headland path and multiple interior lanes, see Fig. \ref{fig_IllustrNotation}. In combination, they enable field coverage. The headland path is generated from an erosion (mathematical operation) of the field contour to the field interior. All position coordinates are initially expressed in the global $(x,y)$-coordinate system. Then, all coordinates can be transformed by a rotation of angle $\theta$ to a rotated $(x_\theta,y_\theta)$-coordinate system such that interior lanes are aligned with the vertical $y_\theta$-axis. Thus, $(x_\theta,y_\theta) = R(\theta)(x,y)$, where $R(\theta)$ denotes a standard rotation matrix with rotation angle $\theta$. In addition, at most two coordinate reflections (Householder transformation in two dimensions) are employed in order to normalise the path planning problem with respect to the field entrance position. Thus, the normalised coordinate system in which paths are ultimately planned is described by coordinates $(\xi,\eta)$. Mathematical details about the coordinate reflections follow in Section \ref{subsec_MEAND}. All three transformation steps (rotation and at most two reflections) are linear. Ultimately, paths planned in the $(\xi,\eta)$-system are recovered in the $(x,y)$-plane by inversion of the linear transformations. In the following, a position within the normalised coordinate system is abbreviated by
\begin{equation}
Z = (\xi,\eta).
\end{equation}
Field contours are assumed such that any rotated interior lane is uninterrupted. Thus, it can be represented by one continuous lane segment for a given $\xi$-coordinate. This assumption is made to enable optimal path planning based on patterns. In general, aforementioned interruptions may arise from deep field indents, bays or tree islands. Note that the previous discussion can be generalised to interior lanes that are curvilinear to a particular part of the field contour. 

The following additional assumptions are made. First, the orientation $\theta$ of the interior lanes within the global coordinate system are assumed to be given. However, the transitions between headland path and interior lanes (left or right turns) are initially not specified. They result from the presented methods. Second, besides the headland path and interior lanes, a field entry located along the headland path is assumed. It is additionally assumed that a designated field exit exists, which may be identical to the field entrance.

See Fig. \ref{fig_IllustrNotation} for illustration of notation and the labeling of transition points between headland path and interior lanes. Not displayed are i) the field exit, which is labeled by $Z_{2N+3}$ and interchangedly abbreviated by $Z_e$ for brevity, and which may be identical to $Z_0$, and ii) the agricultural vehicle position at time $t$ which is labeled by $Z_{2N+4}$ or interchangedly by $Z(t)$, and which may be located anywhere along the headland path or along any interior lane. Note that the precise transitions between headland and interior lanes that account for limited turning radii are initially not specified. Locations $Z_i,~\forall i=0,\dots,2N+4$, denote nodes. Their path connections define edges $e_{i,j},\forall i,j=0,\dots,2N+4$, whereby any edge weight is specified by its path length. Combining edges and nodes, a transition graph can be generated. Nodes indexed by $i=2N+1$ and $i=2N+2$ are introduced to ensure unique edge connections between any two nodes. Based on the transition graph, shortest paths can be determined (\cite{bertsekas1995dynamic}).

\section{Path Planning for Partial Field Coverage\label{sec_PartialFieldCoverage}}

This section discusses three pattern-based path planning methods for partial field coverage: ABp, CIRC and CIRC$^\star$.

\subsection{Path Planning based on the AB pattern -- ABp\label{subsec_MEAND}}

\begin{figure}
\vspace{0.3cm}
\begin{subfigure}[b]{\linewidth}
\centering%
\begin{tikzpicture}
\draw [black,-{Latex[scale=1.0]}] plot [rounded corners=0.25cm] coordinates { (0.8,2)(-0.8,2)(-0.8,0)(5.6,0)(5.6,2)(0.45,2)
(0,1.9)(0,0.1)(0.8,0.1)(0.8,1.9)(1.6,1.9)(1.6,0.1)(2.4,0.1)(2.4,1.9)(3.2,1.9)(3.2,0.1)(4.0,0.1)(4.0,1.9)(4.8,1.9)(4.8,0.1)(4.4,0.1)
};
\draw[dotted,blue,line width=1pt] (2.4, -0.2) -- (2.4, 2.2);
\draw [black,-{Latex[scale=1.0]}] plot [rounded corners=0.25cm] coordinates { (-0.8,1.5)(-0.8,0.5)
};
\draw [black,-{Latex[scale=1.0]}] plot [rounded corners=0.25cm] coordinates { (4.8,1)(4.8,0.1)(5.2,0.1)};
\draw [draw=red,draw opacity=0.5, line width=4pt] plot [rounded corners=0.25cm] coordinates { (2.4,2)(0.45,2)
};
\node[color=red] (a) at (1.25, 2.4) {$\mathcal{Z}_0^{(1)}$};
\node[color=black] (a) at (4.8, -0.35) {towards $Z_e$};
\end{tikzpicture}
\caption{Odd $N$ and $\mathcal{Z}_0^{(1)}$.}
\end{subfigure}\\[10pt]
\begin{subfigure}[b]{\linewidth}
\centering%
\begin{tikzpicture}
\draw [black,-{Latex[scale=1.0]}] plot [rounded corners=0.25cm] coordinates { (-0.8,-5.75)(-0.8,-6)(5.6,-6)(5.6,-4)(-0.8,-4)(-0.8,-6)(0,-5.95)(0,-4.1)(0.8,-4.1)(0.8,-5.9)(1.6,-5.9)(1.6,-4.1)(2.4,-4.1)(2.4,-5.9)(3.2,-5.9)(3.2,-4.1)(4.0,-4.1)(4.0,-5.9)(4.8,-5.9)(4.8,-4.1)(4.4,-4.1)
};
\draw [black,-{Latex[scale=1.0]}] plot [rounded corners=0.25cm] coordinates { (4.8,-5)(4.8,-4.1)(5.2,-4.1)
}; 
\draw[dotted,blue,line width=1pt] (2.4, -6.2) -- (2.4, -3.8);
\draw [black,-{Latex[scale=1.0]}] plot [rounded corners=0.25cm] coordinates { (-0.8,-5)(-0.8,-5.5)
};
\draw [draw=red,draw opacity=0.5,line width=4pt] plot [rounded corners=0.25cm] coordinates { (0.4,-4)(-0.8,-4)(-0.8,-6)(-0.4,-5.97)
};
\node[color=black] (a) at (4.8, -3.65) {towards $Z_e$};
\node[color=red] (a) at (-0.4, -4.45) {$\mathcal{Z}_0^{(2)}$};
\end{tikzpicture}
\caption{Odd $N$ and $\mathcal{Z}_0^{(2)}$.}
\end{subfigure}\\[10pt]
\begin{subfigure}[b]{\linewidth}
\centering%
\begin{tikzpicture}
%
\draw [black,-{Latex[scale=1.0]}] plot [rounded corners=0.25cm] coordinates { (8.9,2)(6.7,2)(6.7,0)(13.9,0)(13.9,2)(7.95,2)
(7.5,1.9)(7.5,0.1)(8.3,0.1)(8.3,1.9)(9.1,1.9)(9.1,0.1)(9.9,0.1)(9.9,1.9)(10.7,1.9)(10.7,0.1)(11.5,0.1)(11.5,1.9)(12.3,1.9)(12.3,0.1)(13.1,0.1)(13.1,1.9)(12.7,1.9)
};
\draw [black,-{Latex[scale=1.0]}] plot [rounded corners=0.25cm] coordinates { (13.1,1)(13.1,1.9)(13.5,1.9)
}; 
\draw [black,-{Latex[scale=1.0]}] plot [rounded corners=0.25cm] coordinates { (6.7,1.5)(6.7,0.5)
};
\draw[dotted,blue,line width=1pt] (10.3, -0.2) -- (10.3, 2.2);
\draw[solid,draw=red,draw opacity=0.5,line width=4pt] (8,2) -- (10.3, 2);
\node[color=red] (a) at (9.1, 2.4) {$\mathcal{Z}_0^{(1)}$};
\node[color=black] (a) at (13.1, 2.35) {towards $Z_e$};
\end{tikzpicture}
\caption{Even $N$ and $\mathcal{Z}_0^{(1)}$.}
\end{subfigure}\\[10pt]
\begin{subfigure}[b]{\linewidth}
\centering%
\begin{tikzpicture}
\draw [black,-{Latex[scale=1.0]}] plot [rounded corners=0.25cm] coordinates { (6.7,-5.75)(6.7,-6)(13.9,-6)(13.9,-4)(6.7,-4)(6.7,-6)(7.5,-5.95)(7.5,-4.1)(8.3,-4.1)(8.3,-5.9)(9.1,-5.9)(9.1,-4.1)(9.9,-4.1)(9.9,-5.9)(10.7,-5.9)(10.7,-4.1)(11.5,-4.1)(11.5,-5.9)(12.3,-5.9)(12.3,-4.1)(13.1,-4.1)(13.1,-5.9)(12.7,-5.9)
};
\draw [black,-{Latex[scale=1.0]}] plot [rounded corners=0.25cm] coordinates { (13.1,-5)(13.1,-5.9)(13.5,-5.9)
}; 
\draw[dotted,blue,line width=1pt] (10.3, -6.2) -- (10.3, -3.8);
\draw [black,-{Latex[scale=1.0]}] plot [rounded corners=0.25cm] coordinates { (6.7,-5)(6.7,-5.5)
};
\draw [draw=red,draw opacity=0.5,line width=4pt] plot [rounded corners=0.25cm] coordinates { (7.9,-4)(6.7,-4)(6.7,-6)(7.1,-5.97)
};
\node[color=black] (a) at (13.1, -6.35) {towards $Z_e$};
\node[color=red] (a) at (7.1, -4.45) {$\mathcal{Z}_0^{(2)}$};
\end{tikzpicture}
\caption{Even $N$ and $\mathcal{Z}_0^{(2)}$.}
\end{subfigure}
\caption{ABp. Distinction between four combinations of odd and even $N$ and, in red, two possible sets of start (and simultaneously end) positions, $\mathcal{Z}_0^{(1)}$ and $\mathcal{Z}_0^{(2)}$. The blue dotted lines indicate the $\frac{\xi_1 + \xi_N}{2}$-coordinate, respectively. Note that only for better visualisation of the route planning logic, the meandering path is not displayed as coinciding with the headland path.}
\label{fig_MEANDER}
\end{figure}
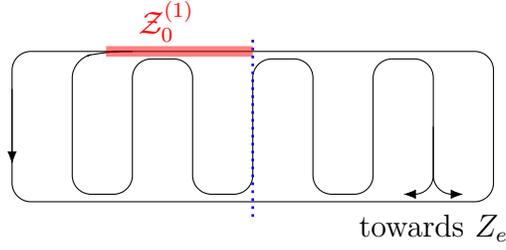
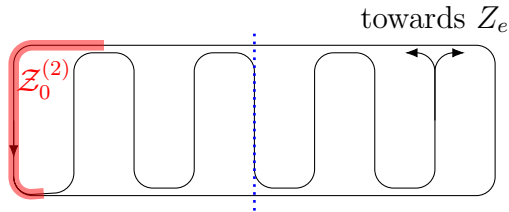
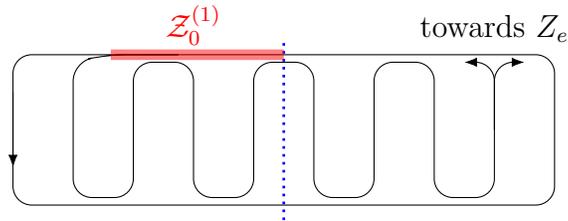
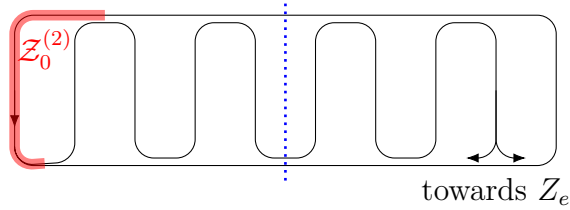


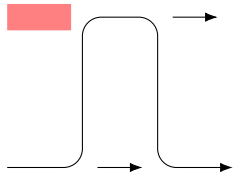
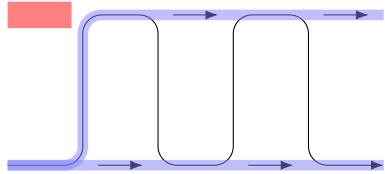
\begin{figure}
\vspace{0.3cm}
\begin{subfigure}[b]{\linewidth}
\centering%
\begin{tikzpicture}
\draw [black,-{Latex[scale=1.0]}] plot [rounded corners=0.25cm] coordinates { (0,0)(1,0)(1,2)(2,2)(2,0)(3,0)
};
\draw [black,-{Latex[scale=1.0]}] plot [rounded corners=0.25cm] coordinates { (1.2,0)(1.8,0)};
\draw [black,-{Latex[scale=1.0]}] plot [rounded corners=0.25cm] coordinates { (2.2,2)(2.8,2)};
\draw [draw=red,draw opacity=0.5, line width=10pt] plot [rounded corners=0.25cm] coordinates { (0.85,2)(0,2)};
\end{tikzpicture}
\caption{The path planning pattern for ABp.}
\end{subfigure}\\[10pt]%
\begin{subfigure}[b]{\linewidth}
\centering%
\begin{tikzpicture}
\draw [black,-{Latex[scale=1.0]}] plot [rounded corners=0.25cm] coordinates { (0,-3.5)(1,-3.5)(1,-1.5)(2,-1.5)(2,-3.5)(3,-3.5)(3,-1.5)(4,-1.5)(4,-3.5)(5,-3.5)
};
\draw [black,-{Latex[scale=1.0]}] plot [rounded corners=0.25cm] coordinates { (1.2,-3.5)(1.8,-3.5)};
\draw [black,-{Latex[scale=1.0]}] plot [rounded corners=0.25cm] coordinates { (2.2,-1.5)(2.8,-1.5)};
\draw [black,-{Latex[scale=1.0]}] plot [rounded corners=0.25cm] coordinates { (3.2,-3.5)(3.8,-3.5)};
\draw [black,-{Latex[scale=1.0]}] plot [rounded corners=0.25cm] coordinates { (4.2,-1.5)(4.8,-1.5)};
\draw [draw=red,draw opacity=0.5, line width=10pt] plot [rounded corners=0.25cm] coordinates { (0.85,-1.5)(0,-1.5)};
\draw [draw=blue!50,draw opacity=0.5, line width=4pt] plot [rounded corners=0.25cm] coordinates { (0,-3.5)(1,-3.5)(1,-1.5)(2,-1.5)(5,-1.5)};
\draw [draw=blue!50,draw opacity=0.5, line width=4pt] plot [rounded corners=0.25cm] coordinates { (0,-3.5)(5,-3.5)};
\end{tikzpicture}
\caption{Concatenation of two path planning patterns for ABp.}
\end{subfigure}
\caption{ABp. (Top plot) Illustration of the path planning pattern. The red bar indicates the area that cannot be reached by neither traversal of the path planning pattern nor traversal of the headland segments in the directions as indicated by the arrows. (Bottom plot) Concatenation of two pattern elements. The traversal along the ``upper'' and ``lower'' headland path is emphasised in blue. Importantly, the area indicated by the red bar can still \emph{not} be reached, see Proposition \ref{Prop_MEAN_HaveToReachN}.}
\label{fig_MEANDER_1unit}
\end{figure}

In current working practice, the overwhelming majority of field coverage paths is planned based on sequential concatenation of geometrically translated \emph{AB lines} such that a meandering path is generated, see \cite{palmer2003improving} and Fig. \ref{fig_MEANDER}. In the following, this method of field coverage is referred to as ABp, which is abbreviated for \emph{AB pattern} (terminology from \cite{bochtis2013benefits}). The fact that it is so widespread can be observed from satellite images and their display of tyre traces. Field coverage can be decomposed into, first, the traveling along the headland path, and, second, the subsequent following of the meandering path until completion of field coverage. Under the assumptions of i) a continuous and thorough initial headland path traversal before interior lane coverage, ii) field shapes according to the description of Section \ref{sec_problFormulation}, and iii) field coverage in \emph{one} single traversal without requiring an intermediate return to a stationary or mobile depot for refilling (or a similar task), the path planning method based on the AB pattern is the optimal strategy, even under compacted area minimisation constraints. This is since the non-working distance is minimised. It is constrained to the headland path segments that were already covered during the initial headland path traversal. However, as will be shown, this method is not optimal for \emph{partial} field coverage. Because of its widespread usage it will serve as the baseline. In the remainder of this section, its optimal demployment is discussed and its intrinsic disadvantages for partial field coverage are exposed. 

Four cases of combinations of the number of interior lanes $N$ and the set of start positions $\mathcal{Z}_0$ are distinguished. They are:~(odd $N$, $\mathcal{Z}_0^{(1)}$), (even $N$, $\mathcal{Z}_0^{(1)}$),  (odd $N$, $\mathcal{Z}_0^{(2)}$) and (even $N$, $\mathcal{Z}_0^{(2)}$). These four entail path planning as displayed in Fig. \ref{fig_MEANDER} for 7 lanes (unven $N$) and 8 lanes (odd $N$). The mathematical description of $\mathcal{Z}_0^{(1)}$ and $\mathcal{Z}_0^{(2)}$ is derived as follows. Based on the definitions in Fig. \ref{fig_IllustrNotation}, $\mathcal{H}=\{(\xi,\eta): (\xi,\eta)\in\text{headland path} \}$ is defined, i.e., as the set of $(\xi,\eta)$-coordinates along the headland path. The auxiliary location $Z_M = (\xi_M,\eta_M)$ with $\xi_M = (\xi_1 + \xi_{N})/2$ and $\eta_M = \max~\{\eta: (\xi,\eta)\in\mathcal{H},~\xi=\xi_M\}$ are further defined. Then, the path length coordinate $s$ is initialised along the headland path at $Z_M$ with $s_M=0$. This permits to define two sets of path coordinates for field entrance positions, i.e., $\mathcal{S}_0^{(1)} = \{s: 0\leq s \leq s_{N+1} \}$ and  $\mathcal{S}_0^{(2)} = \{s: s_{N+1} \leq s \leq s_{1} \}$, whereby $s_{N+1}$ and $s_1$ denot the path length coordinates at location $Z_{N+1}$ and $Z_1$, respectively. Consequently, the two sets of possible start positions expressed within the normalised coordinate framework can be defined as $\mathcal{Z}_0^{(l)} = \{ Z(s): s\in\mathcal{S}_0^{(l)}\}$ for $l=1,2$, and where $Z(s)$ denotes a location at path length coordinate $s$ along the headland path. 
  
Let us elaborate on the employed coordinate system transformations. The rotation step transforms coordinates from the global $(x,y)$-description to the $(x_\theta,y_\theta)$-coordinate system such that interior lanes are aligned to the $y_\theta$-axis. Then, at most two additional coordinate reflections are applied. Therefore, first $x_{\theta,M} = (\max_{x_\theta\in\mathcal{X}_\theta} x_\theta - \min_{x_\theta\in\mathcal{X}_\theta} x_\theta)/2$ is defined with $\mathcal{X}_\theta$ denoting the set of all $x_\theta$-coordinates defining the headland path, before applying the first coordinate reflection by the linear mapping:

\begin{align}
x_\theta^x &= x_{\theta,M} - (x_\theta-x_{\theta,M}),\label{eq_def_mirror1_x}\\
y_\theta^x &= y_\theta.\label{eq_def_mirror1_y}
\end{align}

If the transformed coordinates are not yet sufficiently normalised such that the starting position falls into above framework and according to Fig. \ref{fig_MEANDER}, then the following second coordinate reflection is applied:

\begin{align}
x_\theta^{xy} &= x_\theta^{x},\label{eq_def_mirror2_x}\\
y_\theta^{xy} &= y_{\theta,M}^x - (y_{\theta}^x - y_{\theta,M}^x),\label{eq_def_mirror2_y}
\end{align}
with $y_{\theta,M}^x =  (\max_{y_\theta^x\in\mathcal{Y}_\theta^x} y_{\theta}^x - \min_{y_\theta^x\in\mathcal{Y}_\theta^x} y_{\theta}^x)/2$, and where $\mathcal{Y}_\theta^x$ describes the set of $y_{\theta}^x$-coordinates defining the headland path. At the latest after this second transformation, coordinates are normalised such that the starting position falls into above framework, see Fig. \ref{fig_MEANDER}. Thus, $(\xi,\eta)$ represents the $(x_\theta,y_\theta)$-, $(x_\theta^x,y_\theta^x)$-, or $(x_\theta^{xy},y_\theta^{xy})$-coordinate system. After path planning in the normalised coordinate system, all linear mappings required for normalisation must be inverted to obtain the result within the $(x,y)$-plane. 

The path planning method for ABp is summarised in the following Algorithm. It describes the offline fitting of a traversable path to a given field.\\[3pt]

\begin{tabular}{ l l }
\hline
\multicolumn{2}{l}{Algorithm 1: ABp (offline)}\\[2pt]
 1. & Normalisation of the coordinate system description:\\
 & - one rotation, and at most two reflection steps.\\
  & - description in the $(\xi,\eta)$-plane.\\[2pt]
 2. & Distinction between four cases:\\
 & - four combinations of even/odd $N$ and $\mathcal{Z}_0^{(1)}$/$\mathcal{Z}_0^{(2)}$.\\[2pt]  
 3. & Path planning according to the method of Fig. \ref{fig_MEANDER}. \\[2pt]
 4. & Retransformation of coordinates to the $(x,y)$-plane.\\[2pt]
 \hline \\[2pt]
\label{alg_meand_offline}  
\end{tabular}

Characteristics of the method are discussed. The pattern on which ABp is founded is displayed in Fig. \ref{fig_MEANDER_1unit}. ``Lower'' and ``upper'' headland segments are defined as the set of edges $\{e_{i,j}: i,j=1,\dots,N,2N+1\}$ and $\{e_{i,j}: i,j=N+1,\dots,2N,0,2N+2\}$, respectively. It is distinguished between two possible methods for the transition from lane $N$ towards the headland path; see the labeling ``towards $Z_e$'' in Fig. \ref{fig_MEANDER}. For all four cases, (a)--(d), the direction pointing towards $\mathcal{Z}_0$, which is also the method implicitly assumed for the remainder of Section \ref{subsec_MEAND}, is more favourable than its alternative with regard of path length minimisation from any given position back towards $Z_0$. This is easy to see from the fact that a transition from interior lane to headland path is created, which must be respected as a compacted area minimisation constraint. Even if $Z_e$ is located such that $\xi_{e}>\xi_{N}$, for overall path length minimisation, the method with a final transition pointing towards smaller $\xi_{0}$ may typically still be preferable. This holds especially when frequent returns to a mobile depot are required. Importantly, it also guarantees that after traversal of the $N$th lane, a move along the ``upper'' headland path heading towards $Z_0$ is then possible.

\begin{proposition}\label{Prop_MEAN_HaveToReachN}
Assume a normalised coordinate system description with $Z_0=(\xi_0,\eta_0)$ according to Fig. \ref{fig_MEANDER}, in which it is accounted for compacted area minimisation constraints, and in which it is aimed at finding the shortest path from position $Z(t)=(\xi(t),\eta(t))$ at time $t$ to $Z_0$. Then, if $\xi(t)\geq \xi_{0}$, a corresponding agricultural vehicle has to \emph{always} traverse the ultimate lane $N$ as part of the path to reach $Z_0$, unless it already has covered all interior lanes and is heading back towards $Z_0$ along the ``upper'' headland path, or unless it is heading towards $Z_0$ along the ``upper'' headland path as part of the initial headland path traversal.
\end{proposition}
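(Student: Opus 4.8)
The plan is to reformulate the compacted area minimisation constraints as a \emph{directed} transition graph and then argue about the reachability of $Z_0$ within it. First I would fix the admissible moves: by the discussion accompanying Fig.~\ref{fig_Def_LaneHeadlLane}, at every transition point between an interior lane and the headland the limited turning radius, together with the requirement to stay on established tyre traces, renders the turn effectively one-directional (the analogue of admitting P--Q and P--D while forbidding P--C). Hence I would replace the undirected edge set $\{e_{i,j}\}$ by the subset of directed edges that the ABp coverage of Fig.~\ref{fig_MEANDER} actually establishes, and record that \emph{every} admissible path, in particular every shortest path from $Z(t)$ to $Z_0$, is a directed walk in this graph. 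The hypothesis $\xi(t)\geq\xi_0$ then says that the vehicle sits at or to the right of $Z_0$, so any path to $Z_0$ must realise a net leftward displacement terminating on the upper headland.

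Next I would characterise the directed in-edges of $Z_0$. Since $Z_0$ lies on the upper headland path, it can only be entered along that path, and there are exactly two candidate orientations: arrival moving towards decreasing $\xi$ (``from the right'') and arrival moving towards increasing $\xi$ (``from the left''). The core geometric claim, which I would establish from the meander structure of Fig.~\ref{fig_MEANDER}, is that the leftward-oriented segment of the upper headland terminating at $Z_0$ can be \emph{entered only at} $Z_{2N}$, the top of lane $N$. This is because all intermediate up-lane-to-upper-headland transitions of the meander deposit the vehicle moving towards increasing $\xi$, and --- by the choice of the final transition pointing towards $\mathcal{Z}_0$ emphasised just before the proposition --- the unique established transition injecting \emph{leftward} motion onto the upper headland is the one issued from $Z_{2N}$ after lane $N$ has been covered.

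I would then split on the arrival orientation. If the path arrives at $Z_0$ from the right, by the previous step it must pass through $Z_{2N}$ and therefore must have traversed lane $N$; the only sub-case in which lane $N$ is \emph{not} still to be traversed from $Z(t)$ onwards is when the vehicle already occupies this leftward return segment having covered all lanes, which is precisely the first stated exception. If instead the path arrives from the left, I would argue that the leftmost portion of the upper headland around $Z_{N+1}$ and $Z_0$ carries an admissible rightward orientation only during the initial, uninterrupted headland traversal hypothesised for ABp (cf.\ the initial loop of Fig.~\ref{fig_IllustrNotation}); once the meander has begun, no established directed edge approaches $Z_0$ from the left, so this case collapses to the second stated exception. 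Outside these two exceptions the arrival is necessarily the leftward one through $Z_{2N}$, yielding the claim for both $Z(t)$ on a lane and $Z(t)$ on the headland. As a cross-check I would invoke the ``unreachable red bar'' of Fig.~\ref{fig_MEANDER_1unit}, which certifies that the region adjacent to $Z_0$ cannot be accessed by short-cutting the pattern, so lane $N$ cannot be by-passed.

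The step I expect to be the main obstacle is the rigorous justification that \emph{no} admissible transition other than the one from $Z_{2N}$ produces leftward motion onto the upper headland towards $Z_0$. This requires enumerating the established turns at each lane top uniformly across the four configurations of Fig.~\ref{fig_MEANDER} (odd/even $N$ combined with $\mathcal{Z}_0^{(1)}$/$\mathcal{Z}_0^{(2)}$) and invoking the turning-radius argument of Fig.~\ref{fig_Def_LaneHeadlLane} to discard the forbidden reverse turns, while simultaneously keeping the two exception clauses cleanly separated from the generic case.
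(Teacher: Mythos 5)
Your overall strategy is the same one the paper relies on: the paper's proof is a two-line appeal to the construction (the meander motif of Fig.~\ref{fig_MEANDER_1unit} plus the compacted-area constraints), and your directed-reachability formalisation is a legitimate way to make that appeal precise. However, your designated ``main obstacle'' step is, as stated, false, and this is a genuine gap rather than a technicality. You claim that the unique established transition injecting leftward motion onto the ``upper'' headland is the one issued from $Z_{2N}$. This ignores the headland corner $Z_{2N+2}$: the initial headland-path traversal establishes the loop lower-headland-rightward $\to$ right edge $\to$ upper-headland-leftward, so the edges $e_{N,2N+2}$ and $e_{2N+2,2N}$ deposit the vehicle on the upper headland moving towards decreasing $\xi$ \emph{without} lane $N$ ever being traversed. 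The paper itself treats exactly this corner route as a live alternative to lane $N$ in Proposition~\ref{Prop_CIRC_ZtToZ0}. Consequently your argument, which purports to show that \emph{every} admissible path must pass through $Z_{2N}$, proves too much; what is actually true (and what the proposition, read together with its ``shortest path'' hypothesis, asserts) is that the corner detour is strictly longer than entering lane $N$ --- by roughly twice the $\xi$-distance between $\xi_N$ and the corner --- so the \emph{shortest} admissible path uses lane $N$. Your proof needs this comparison step; uniqueness of the entry point cannot be salvaged.

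A second, smaller issue is your dismissal of arrivals at $Z_0$ ``from the left.'' The tops of the upward-traversed lanes lying to the left of $\xi_0$ deposit the vehicle on the upper headland moving towards increasing $\xi$, and the headland being fully compacted, such a vehicle may drive straight past subsequent lane tops and reach $Z_0$ from the left well after the initial headland traversal. This case is not eliminated by your orientation argument; it is eliminated by the hypothesis $\xi(t)\geq\xi_0$, because reaching such a launch point from $\xi(t)\geq\xi_0$ first requires crossing the line $\xi=\xi_0$ leftward, which (on the upper headland) already means arriving at $Z_0$, or (on the lower headland) again forces lane $N$ or the longer corner detour. Spelling that out would close the case correctly; as written, attributing the leftward arrival exclusively to the initial headland traversal is not justified by the established traces.
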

\begin{proof}
The proof is by construction and follows directly from the meandering path motif in Fig. \ref{fig_MEANDER_1unit}, and the assumption of complying with compacted area minimisation constraints. See Fig. \ref{fig_MEANDER} for visualisation.
\end{proof}

Proposition \ref{Prop_MEAN_HaveToReachN} is particularly relevant for fields with \emph{many} lanes $N$ (``fat'' fields). For $\xi(t)<\xi_0$, no such generalizing statement can be made without making further differentiations between even/odd $N$ and $\mathcal{Z}_0^{(l)}$ for $l=1,2$. However, Proposition \ref{Prop_MEAN_HaveToReachN} can be generalised to alternative locations different from $Z_0$. See Fig. \ref{fig_MEANDER_1unit} for visualisation of areas that cannot be reached without reaching the final lane after the concatenation of multiple pattern elements.

\begin{remark}
\label{rmk_MEAND_Astar_ZtZ0}
While Proposition \ref{Prop_MEAN_HaveToReachN} guarantees that the last lane $N$ must be reached, no generalizing statement can be made with respect to the shortest path for reaching it. The $\xi$-coordinate must be monotonically increasing throughout the process of reaching it. However, it does  \emph{not} necessarily have to be \emph{strictly} monotonically increasing. For example, depending on the field contour and orientation of interior lanes, the shortest path may involve transitions along interior lanes from ``upper'' to ``lower'' headland segments, vice versa, and even multiple times during the process of reaching lane $N$. After traversal of the $N$th lane, the shortest path to $Z_0$ is sought. Here, the same concepts apply. Strictly monotonous movement towards $Z_0$ is not required. Potential transitions between ``upper'' and ``lower'' headland path segments may be path length optimal. In practice, a shortest path algorithm, see \cite{bertsekas1995dynamic}, can be employed for both i) the reaching of the $N$th lane starting from location $Z(t)$ at time $t$, and ii) the reaching of $Z_0$ after the traversal of the $N$th lane.
\end{remark}

\begin{remark}
\label{rmk_MEAND_resumption}
For the \emph{resumption} of work at location $Z(\tau^\text{last}) = (\xi(\tau^\text{last}),\eta(\tau^\text{last}))$, the shortest possible path from $Z_0$ to that location can be selected if $\xi(\tau^\text{last})>\xi_0$ and $Z(\tau^\text{last})$ is not located along the ``upper'' headland. This can be seen from Fig. \ref{fig_MEANDER_1unit}; there is no transition from headland to interior lane or vice versa that is prohibiting such shortest path. In contrast, for alternative locations of $Z(\tau^\text{last})$, the heading direction along the field coverage path plays an important role. To resume a specific heading orientation, a path may have to be taken that is deviating from the shortest path connecting $Z_0$ and $Z(\tau^\text{last})$.
\end{remark}

At any time $t$, the agricultural vehicle can be in any of three states $\gamma(t)\in\{0,1,2\}$. The case $\gamma(t)=0$ corresponds to a mode in which the vehicle is on its way back to the \emph{resuming} location $Z(\tau^\text{last})$ at which the field coverage was terminated last at time $\tau^\text{last}$. The case $\gamma(t)=1$ corresponds to a mode in which the vehicle is following the field coverage path plan according to Fig. \ref{fig_MEANDER}. Finally, $\gamma(t)=2$ indicates the mode in which the vehicle is in the process of \emph{returning} to $Z_0$ for refilling. During operation the vehicle alternates between any of these three states dependent on its storage tank fill-level.

The predicted fill-level at time $t+\Delta t$ is denoted by $\hat{f}(t+\Delta t)$, and bounded between $0$ and the maximum fill capacity, and $\Delta t$ the time discretisation. During online operation and dependent on the location of $Z_0$, it often is favourable to trigger a return command on the last lane with heading in negative $\eta$-direction before the fill-level is about to reach zero. This is since the resulting path to $Z_0$ typically involves the following steps: i) completion of the current interior lane,  ii) traveling along the ``lower'' headland path, iii) a transition to the ``upper'' headland path via interior lane $N$, and iv) traveling along the ``upper'' headland path until $Z_{0}$. The examples of Section \ref{sec_IllustrativeEx} will further illustrate this consideration. In general, the decision upon when to trigger the return command must trade-off current fill-level $f(t)\geq 0$, the shortest path length $P(Z(t),Z_0)$ from the current location $Z(t)$ to the depot $Z_0$, and the path length $P(\hat{Z}_{\hat{f}=0},Z_0)$ from the predicted location $\hat{Z}_{\hat{f}=0}$ at which the fill-level is expected to reach zero to the depot $Z_0$. A return command may be triggered if $ f(t)$ is smaller than a small threshold fill-level (above which a return to $Z_0$ is undesired) and $P(Z(t),Z_0)<P(\hat{Z}_{\hat{f}=0},Z_0)$. Alternatively, it must be triggered if $f(t)=0$. For the prediction of $\hat{f}(t)$ and $\hat{Z}_{\hat{f}=0}$, experience from past spraying maps may be used. Alternatively, a linear parameter varying model, $f(t+\Delta t)=f(t)-a_f(t)\Delta t$ can be assumed, where $a_f(t)$ denotes the time-varying emptying rate parameter (for variable rate spray applications). Then, at every $t$, a \emph{discrete Extended Kalman Filter} (\cite{anderson1979optimal}) can be employed to provide estimates $\hat{f}(t)$ and $\hat{a}_f(t)$, based on which $\hat{Z}_{\hat{f}=0}$ can be predicted through model forward simulation. Note that this approach enables sensor fusion of possibly multiple measurements (e.g., fill level and nozzles).

\subsection{Path Planning based on the Circular Pattern -- CIRC\label{subsec_CIRC}}

The path planning method for partial field coverage labeled CIRC is visualised in Fig. \ref{fig_CIRCULAR}. It is referred to as CIRC because of its circular path planning pattern, see Fig. \ref{fig_CIRCULAR_1unit}. The method is summarised in the following Algorithm:\\[3pt]  

\begin{tabular}{ l l }
\hline
 \multicolumn{2}{l}{Algorithm 3: CIRC (offline)}\\[2pt]
 1. & Normalisation of the coordinate system description:\\
 & - one rotation, at most two reflection steps.\\
  & - description in the $(\xi,\eta)$-plane.\\[2pt]
 2. & Distinction between four cases:\\
 & - four combinations of even/odd $N$ and $\mathcal{Z}_0^{(1)}$/$\mathcal{Z}_0^{(2)}$.\\[2pt]  
 3. & Path planning according to the method of Fig. \ref{fig_CIRCULAR}. \\[2pt]
 4. & Retransformation of coordinates to the $(x,y)$-plane.\\[2pt]
 \hline \\[2pt]  
\end{tabular}

\begin{figure}
\begin{subfigure}[b]{\linewidth}
\centering%
\begin{tikzpicture}
\draw [black,-{Latex[scale=1.0]}] plot [rounded corners=0.25cm] coordinates { (0.4,2)(-0.8,2)(-0.8,0)(5.6,0)(5.6,2)(0.45,2)
(0,1.9)(0,0.1)(1.6,0.1)(1.6,1.9)(0.8,1.9)(0.8,0.2)(3.2,0.1)(3.2,1.9)(2.4,1.9)(2.4,0.2)(4.8,0.1)(4.8,1.9)(4.0,1.9)(4.0,0.2)(4.8,0.16)(4.8,0.85)
};
\draw[dotted,blue,line width=1pt] (2.4, -0.2) -- (2.4, 2.2);
\draw [black,-{Latex[scale=1.0]}] plot [rounded corners=0.25cm] coordinates { (-0.8,1.5)(-0.8,1)
};
\draw [black,-{Latex[scale=1.0]}] plot [rounded corners=0.25cm] coordinates { (4.5,0.17)(5.2,0.17)};
%
\draw [draw=red,draw opacity=0.5, line width=4pt] plot [rounded corners=0.25cm] coordinates { (2.4,2)(0.45,2)
};
\node[color=red] (a) at (1.25, 2.4) {$\mathcal{Z}_0^{(1)}$};
\node[color=black] (a) at (4.8, -0.35) {towards $Z_e$};
\end{tikzpicture}
\caption{Odd $N$ and $\mathcal{Z}_0^{(1)}$.}
\end{subfigure}\\[0pt]
\begin{subfigure}[b]{\linewidth}
\centering%
\begin{tikzpicture}
\draw [black,-{Latex[scale=1.0]}] plot [rounded corners=0.25cm] coordinates { (-0.8,-4.75)(-0.8,-6)(5.6,-6)(5.6,-4)(-0.8,-4)(-0.8,-5.9)(0.8,-5.9)(0.8,-4.1)(0,-4.1)(0,-5.8)(2.4,-5.9)(2.4,-4.1)(1.6,-4.1)(1.6,-5.8)(4.0,-5.9)(4.0,-4.1)(3.2,-4.1)(3.2,-5.8)(4.8,-5.87)(4.8,-4.1)(4.3,-4.1)
};
\draw [black,-{Latex[scale=1.0]}] plot [rounded corners=0.25cm] coordinates { (4.8,-5)(4.8,-4.1)(5.2,-4.1)
}; 
\draw[dotted,blue,line width=1pt] (2.4, -6.2) -- (2.4, -3.8);
\draw [black,-{Latex[scale=1.0]}] plot [rounded corners=0.25cm] coordinates { (-0.8,-4.5)(-0.8,-5)
};
\draw [draw=red,draw opacity=0.5,line width=4pt] plot [rounded corners=0.25cm] coordinates { (0.45,-4)(-0.8,-4)(-0.8,-6)(0.35,-5.97)
};
\node[color=black] (a) at (4.8, -3.65) {towards $Z_e$};
\node[color=red] (a) at (-0.4, -4.45) {$\mathcal{Z}_0^{(2)}$};
\end{tikzpicture}
\caption{Odd $N$ and $\mathcal{Z}_0^{(2)}$.}
\end{subfigure}\\[0pt]
\begin{subfigure}[b]{\linewidth}
\centering%
\begin{tikzpicture}
\draw [black,-{Latex[scale=1.0]}] plot [rounded corners=0.25cm] coordinates { (8.9,2)(6.7,2)(6.7,0)(13.9,0)(13.9,2)(7.95,2)
(7.5,1.9)(7.5,0.2)(9.1,0.1)(9.1,1.9)(8.3,1.9)(8.3,0.2)(10.7,0.1)(10.7,1.9)(9.9,1.9)(9.9,0.2)(12.3,0.1)(12.3,1.9)(11.5,1.9)(11.5,0.2)(13.1,0.1)(13.1,1.9)(12.7,1.9)
};
\draw [black,-{Latex[scale=1.0]}] plot [rounded corners=0.25cm] coordinates { (13.1,1)(13.1,1.9)(13.5,1.9)
}; 
\draw [black,-{Latex[scale=1.0]}] plot [rounded corners=0.25cm] coordinates { (6.7,1.5)(6.7,1)
};
\draw[dotted,blue,line width=1pt] (10.3, -0.2) -- (10.3, 2.2);
\draw[solid,draw=red,draw opacity=0.5,line width=4pt] (8,2) -- (10.3, 2);
\node[color=red] (a) at (9.1, 2.4) {$\mathcal{Z}_0^{(1)}$};
\node[color=black] (a) at (13.1, 2.35) {towards $Z_e$};
\end{tikzpicture}
\caption{Even $N$ and $\mathcal{Z}_0^{(1)}$.}
\end{subfigure}\\[0pt]
\begin{subfigure}[b]{\linewidth}
\centering%
\begin{tikzpicture}
\draw [black,-{Latex[scale=1.0]}] plot [rounded corners=0.25cm] coordinates { (6.7,-4.75)(6.7,-6)(13.9,-6)(13.9,-4)(6.7,-4)(6.7,-5.9)(8.3,-5.9)(8.3,-4.1)(7.5,-4.1)(7.5,-5.8)(9.9,-5.9)(9.9,-4.1)(9.1,-4.1)(9.1,-5.8)(11.5,-5.9)(11.5,-4.1)(10.7,-4.1)(10.7,-5.8)(13.1,-5.9)(13.1,-4.1)(12.3,-4.1)(12.3,-5.8)(13.1,-5.84)(13.1,-5.15)
};
\draw [black,-{Latex[scale=1.0]}] plot [rounded corners=0.25cm] coordinates { (12.8,-5.83)(13.5,-5.83)
}; 
\draw[dotted,blue,line width=1pt] (10.3, -6.2) -- (10.3, -3.8);
\draw [black,-{Latex[scale=1.0]}] plot [rounded corners=0.25cm] coordinates { (6.7,-4.5)(6.7,-5)
};
\draw [draw=red,draw opacity=0.5,line width=4pt] plot [rounded corners=0.25cm] coordinates { (7.95,-4)(6.7,-4)(6.7,-6)(7.85,-5.97)
};
\node[color=black] (a) at (13.1, -6.35) {towards $Z_e$};
\node[color=red] (a) at (7.1, -4.45) {$\mathcal{Z}_0^{(2)}$};
\end{tikzpicture}
\caption{Even $N$ and $\mathcal{Z}_0^{(2)}$.}
\end{subfigure}
\caption{CIRC. Distinction between four cases of combinations of odd and even $N$, and in red, two possible sets of start positions, $\mathcal{Z}_0^{(1)}$ and $\mathcal{Z}_0^{(2)}$. The blue dotted lines indicate the $\frac{\xi_1 + \xi_N}{2}$-coordinate, respectively.}
\label{fig_CIRCULAR}
\end{figure}
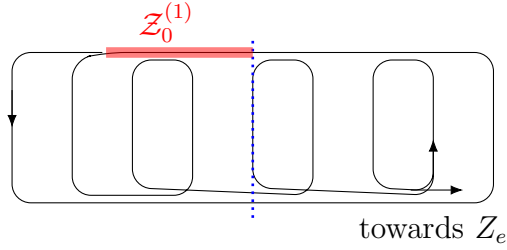
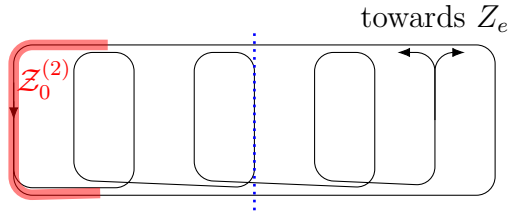
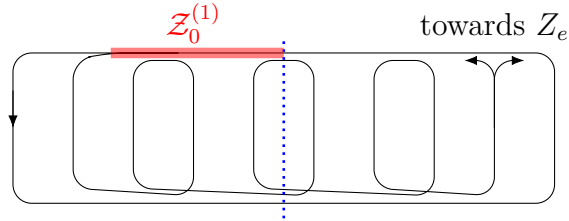
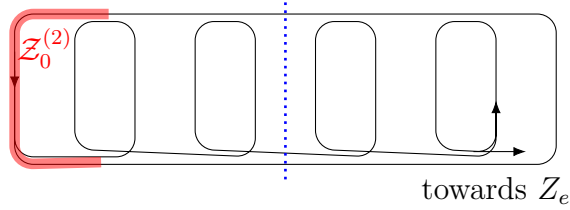

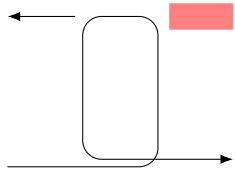
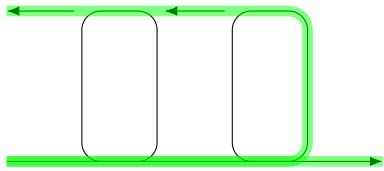
\begin{figure}
\vspace{0.3cm}
\begin{subfigure}[b]{\linewidth}
\centering%
\begin{tikzpicture}
\draw [black,-{Latex[scale=1.0]}] plot [rounded corners=0.25cm] coordinates { (0,0)(2,0)(2,2)(1,2)(1,0.1)(3,0.1)
};
\draw [black,-{Latex[scale=1.0]}] plot [rounded corners=0.25cm] coordinates { (0.9,2)(0,2)};
%
\draw [draw=red,draw opacity=0.5, line width=10pt] plot [rounded corners=0.25cm] coordinates { (2.15,2)(3,2)};
\end{tikzpicture}
\caption{Path planning pattern for CIRC and CIRC$^\star$.}
\end{subfigure}\\[10pt]%
\begin{subfigure}[b]{\linewidth}
\centering%
\begin{tikzpicture}
\draw [black,-{Latex[scale=1.0]}] plot [rounded corners=0.25cm] coordinates { (0,-3.5)(2,-3.5)(2,-1.5)(1,-1.5)(1,-3.5)(4,-3.5)(4,-1.5)(3,-1.5)(3,-3.5)(5,-3.5)
};
\draw [black,-{Latex[scale=1.0]}] plot [rounded corners=0.25cm] coordinates { (0.9,-1.5)(0,-1.5)};
\draw [black,-{Latex[scale=1.0]}] plot [rounded corners=0.25cm] coordinates { (2.9,-1.5)(2.1,-1.5)};
\draw [draw=green,draw opacity=0.5, line width=4pt] plot [rounded corners=0.25cm] coordinates { (0,-3.5)(4,-3.5)(4,-1.5)(0,-1.5)};
\draw [draw=green,draw opacity=0.5, line width=4pt] plot [rounded corners=0.25cm] coordinates { (0,-3.5)(5,-3.5)};
\end{tikzpicture}
\caption{Concatenation of two path planning patterns.}
\end{subfigure}
\caption{
CIRC and CIRC$^\star$. (Top plot) Illustration of a path planning pattern element. The red bar indicates the area that cannot be reached by neither traversal of the path planning pattern element nor traversal of the headland segments in the directions of the arrows. (Bottom plot) Concatenation of two patterns. Importantly, the area indicated by the red bar in the top plot can now be reached. See the green paths for emphasis of the traversal along the ``lower'' headland path, and a transition via an interior lane to the ``upper'' headland path. 
}
\label{fig_CIRCULAR_1unit}
\end{figure}

Note that, while not in the \emph{partial} field coverage context, the same pattern is also employed in \cite{bochtis2013benefits}. There, it is referred to as ``Skip and Fill''-pattern. Here, the label ``CIRC'' is preferred to avoid confusion with the fill-level of storage tanks.


\begin{proposition}\label{Prop_CIRC_ZtToZ0}
Assume a normalised coordinate system description with $Z_0=(\xi_0,\eta_0)$ according to Fig. \ref{fig_CIRCULAR}, in which it is accounted for compacted area minimisation constraints, and in which the goal is to find the shortest path from position $Z(t)=(\xi(t),\eta(t))$ at time $t$ to $Z_0$. Then, for $\xi_1 \leq \xi(t)< \xi_{N-1}$, an agricultural vehicle can always traverse at the latest the second next interior lane such that afterwards it can travel along the ``upper'' headland path in direction of $Z_0$. For $\xi_1\leq \xi(t) < \xi_{N-1}$, either lane $n$ or $n+1$ permit such traversal, whereby $n$ is such that $\xi_{n-1}\leq \xi(t)<\xi_n$. For $\xi(t)<\xi_1$, in general, either lane $n=2$ or $n=3$ permit such traversal, see Fig. \ref{fig_CIRCULAR}. For $\xi(t)\geq\xi_{N-1}$, lane $n=N$ or a path through edges $e_{N,2N+2}$ and $e_{2N+2,2N}$ permits the traversal. 
\end{proposition}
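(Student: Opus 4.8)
The plan is to prove the statement by construction, reading the admissible transitions directly off the CIRC pattern motif and its concatenation in Fig.~\ref{fig_CIRCULAR_1unit}, exactly as was done for Proposition~\ref{Prop_MEAN_HaveToReachN}. First I would classify every interior lane as an \emph{up-lane} or a \emph{down-lane} according to whether the first-coverage traversal ascends it (from the ``lower'' to the ``upper'' headland) or descends it. Inspecting the skip-and-fill motif shows that, away from the two field ends, up-lanes and down-lanes strictly alternate, and that every junction trace laid down on the first pass has the same handedness. Consequently the head of each up-lane carries an established trace that turns onto the ``upper'' headland in the direction of decreasing $\xi$, i.e. toward $Z_0$, while its foot carries an established trace entered from decreasing $\xi$ along the ``lower'' headland. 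By the compacted-area minimisation constraint (Fig.~\ref{fig_Def_LaneHeadlLane}), these are precisely the turns a returning vehicle is permitted to reuse, whereas the opposite turns at a down-lane are inadmissible and so cannot be exploited to reach the ``upper'' headland heading toward $Z_0$.

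For the main regime $\xi_1\le\xi(t)<\xi_{N-1}$, let $n$ be the index with $\xi_{n-1}\le\xi(t)<\xi_n$. By the alternation just established, exactly one of the lanes $n$ and $n+1$ is an up-lane. If lane $n$ is the up-lane, I route the vehicle (along the ``lower'' headland in the direction of increasing $\xi$, if it is not already on a lane) to the foot $Z_n$, take the established turn up the lane, and at the head take the established turn onto the ``upper'' headland toward $Z_0$; this uses the immediately next lane. If instead lane $n$ is a down-lane, then lane $n+1$ is the up-lane and the identical construction applied to lane $n+1$ succeeds, using the second-next lane. In either case every turn invoked coincides with a first-coverage trace, so the constraint is met; this yields the ``at the latest the second next interior lane'' claim together with the stated dichotomy between lane $n$ and lane $n+1$.

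The two boundary regimes are handled separately. For $\xi(t)<\xi_1$ the vehicle lies left of all lanes; depending on the parity of $N$ and on whether $\mathcal{Z}_0^{(1)}$ or $\mathcal{Z}_0^{(2)}$ is in force, the first available up-lane is lane $n=2$ or lane $n=3$, and the same ascend-then-turn construction applies, giving the stated alternative. For $\xi(t)\ge\xi_{N-1}$ there need not be a lane $n+1$ to fall back on: if lane $N$ is an up-lane I use it directly, and otherwise I route around the right-hand field edge through the auxiliary node $Z_{2N+2}$ via the edges $e_{N,2N+2}$ and $e_{2N+2,2N}$, reaching the head of lane $N$ and hence the ``upper'' headland, which again carries the established leftward trace toward $Z_0$.

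The hard part will not be the counting of lanes but the bookkeeping of turn handedness under the compacted-area constraint: for each configuration I must check that both the foot turn (entered from decreasing $\xi$) and the head turn (exiting toward decreasing $\xi$) at the chosen up-lane match the handedness of the trace actually laid down on the first pass, and that the vehicle can reach that foot from $Z(t)$ whether it currently sits on the ``lower'' headland, on the ``upper'' headland, or on an interior lane. Because the four combinations of even/odd $N$ and $\mathcal{Z}_0^{(1)}/\mathcal{Z}_0^{(2)}$ shift which lanes are up-lanes (Fig.~\ref{fig_CIRCULAR}), the construction must be re-read in each case; the geometry is identical throughout and only the lane labels are permuted, so once the alternation and the common handedness are established the four cases are dispatched uniformly.
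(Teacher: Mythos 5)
Your proposal is correct and takes essentially the same route as the paper: the paper's proof is a one-line appeal to construction from the circular pattern motif of Fig.~\ref{fig_CIRCULAR_1unit} together with the compacted area minimisation constraints, and your up-lane/down-lane alternation argument (with the separate treatment of the two boundary regimes and the four even/odd $N$, $\mathcal{Z}_0^{(1)}$/$\mathcal{Z}_0^{(2)}$ cases) is a faithful, more explicit elaboration of exactly that construction. The handedness bookkeeping you flag as the hard part checks out against Fig.~\ref{fig_CIRCULAR}: every up-lane's head turn does point toward decreasing $\xi$ on the ``upper'' headland, so no gap remains.
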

\begin{proof}
The proof is by construction and follows directly from the circular path motif in Fig. \ref{fig_CIRCULAR_1unit}, and the assumption of complying with compacted area minimisation constraints. See also Fig. \ref{fig_CIRCULAR} for visualisation. 
\end{proof}

\begin{remark}
\label{rmk_CIRC_EnforcedTravelUpperHeadl}
An implication of CIRC is that once the ``upper'' headland path is reached (assuming a normalised coordinate system description), the vehicle is constrained to travel along it until reaching $Z_0$. This is because of the compacted area minimisation constraint and the characteristic pattern according to Fig. \ref{fig_CIRCULAR_1unit}. This is also in contrast to the ABp-method and implies that \emph{no} additional invoking of a shortest path algorithm is required. While for ABp the shortest path to $Z_0$ after reaching of the $N$th lane may, in general, involve traversals of interior lanes and thus switching between ``upper'' and ``lower'' headland paths, this is not the case for CIRC. This distinction is the reason that no guarantee can be given about a shorter path length for CIRC. Consider an extremely large bulb-like headland segment located between two interior lanes. While ABp can avoid this by a traversal to the ``lower'' headland path, the method according to CIRC is \emph{enforced} to traverse it. Note that such (theoretical) scenarios are seldom in practice. In Section \ref{sec_IllustrativeEx}, a quantitative comparison for a rectangular field is given as a function of $N$, the length of interior lanes and the machine operating width.
\end{remark}


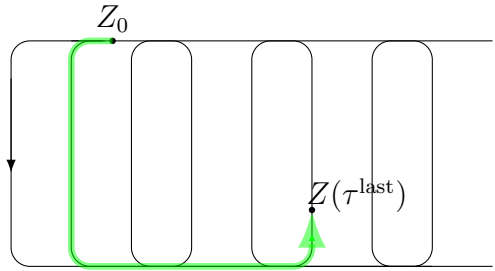
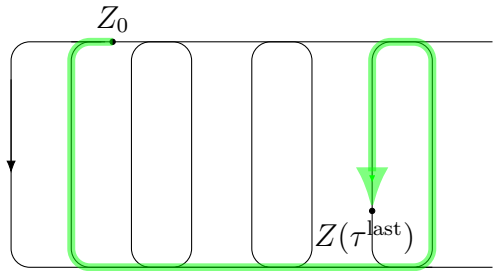
\begin{figure}
\begin{subfigure}[t]{\linewidth}
\centering%
\begin{tikzpicture}
\draw [black] plot [rounded corners=0.25cm] coordinates { (0.4,-2.25)(-0.8,-2.25)(-0.8,-5.25)(5.6,-5.25)
};
\draw [black] plot [rounded corners=0.25cm] coordinates { (0,-2.25)(5.6,-2.25)
};
\draw [black] plot [rounded corners=0.25cm] coordinates {
(0.45,-2.25)(0,-2.25)
(0,-5.25)(1.6,-5.25)(1.6,-2.25)(0.8,-2.25)(0.8,-5.25)
(1.6,-5.25)(3.2,-5.25)(3.2,-2.25)(2.4,-2.25)(2.4,-5.25)
(3.2,-5.25)(4.8,-5.25)(4.8,-2.25)(4,-2.25)(4,-5.25)
(4.8,-5.25)
};
\draw[fill=black] (0.55,-2.25) circle (1pt); 
\node[color=black] (a) at (0.55, -1.95) {$Z_0$};
%
\draw [black,-{Latex[scale=1.0]}] plot [rounded corners=0.25cm] coordinates { (-0.8,-2.75)(-0.8,-4)
};
\draw[fill=black] (3.2,-4.5) circle (1pt); 
\node[color=black] (a) at (3.8, -4.3) {$Z(\tau^\text{last})$};
\draw [draw=green,draw opacity=0.5, line width=2.5pt,-{Latex[scale=1.0]}] plot [rounded corners=0.25cm] coordinates { (0.55,-2.25)(0,-2.25)(0,-5.25)(3.2,-5.25)(3.2,-4.5)};
\end{tikzpicture}
\caption{First scenario.}
\end{subfigure}\\[0pt]%
\begin{subfigure}[b]{\linewidth}
\centering%
\begin{tikzpicture}
\draw [black] plot [rounded corners=0.25cm] coordinates { (0.4,-2.25)(-0.8,-2.25)(-0.8,-5.25)(5.6,-5.25)
};
\draw [black] plot [rounded corners=0.25cm] coordinates { (0,-2.25)(5.6,-2.25)
};
\draw [black] plot [rounded corners=0.25cm] coordinates {
(0.45,-2.25)(0,-2.25)
(0,-5.25)(1.6,-5.25)(1.6,-2.25)(0.8,-2.25)(0.8,-5.25)
(1.6,-5.25)(3.2,-5.25)(3.2,-2.25)(2.4,-2.25)(2.4,-5.25)
(3.2,-5.25)(4.8,-5.25)(4.8,-2.25)(4,-2.25)(4,-5.25)
(4.8,-5.25)
};
\draw[fill=black] (0.55,-2.25) circle (1pt); 
\node[color=black] (a) at (0.55, -1.95) {$Z_0$};
%
\draw [black,-{Latex[scale=1.0]}] plot [rounded corners=0.25cm] coordinates { (-0.8,-2.75)(-0.8,-4)
};
\draw[fill=black] (4,-4.5) circle (1pt); 
\node[color=black] (a) at (3.9, -4.85) {$Z(\tau^\text{last})$};
\draw [draw=green,draw opacity=0.5, line width=3pt,-{Latex[scale=1.0]}] plot [rounded corners=0.25cm] coordinates { (0.55,-2.25)(0,-2.25)(0,-5.25)(4.8,-5.25)(4.8,-2.25)(4,-2.25)(4,-4.5)};
\end{tikzpicture}
\caption{Second scenario.}
\end{subfigure}
\caption{CIRC. (Top plot) Resuming work at a position $Z(\tau^\text{last})$ located along a lane with heading direction towards positive $\eta$. (Bottom plot) Resuming work at a position $Z(\tau^\text{last})$ located along a lane with heading direction towards \emph{negative} $\eta$. The path length is much shorter for the first scenario.}
\label{fig_Ex1and2_backtoPos}
\end{figure}

The importance of a normalised coordinate system is stressed in which a  start position $Z_0$ is located as shown in Fig. \ref{fig_CIRCULAR}. In fact, the path planning method CIRC is tailored to such coordinate system representation. Fig. \ref{fig_Ex1and2_backtoPos} illustrates two possible scenarios for resuming work after refilling at the depot. In these two scenarios, $Z(\tau^\text{last})$ is situated either along a lane with heading direction towards positive or negative $\eta$. As indicated, the first method is preferable. This is because it avoids the traversal of an entire interior lane without performing actual application work.

\begin{remark}
\label{rmk_CIRC_resumingWork}
If the field entrance is located such that $Z_0\in\mathcal{Z}_0^{(1)}$ with  $\xi_1 < \xi_0\leq \xi_2$, and $Z(\tau^\text{last})$ for the resumption of work is located such the $\xi(\tau^\text{last})>\xi_0$, then lane $\tilde{k}=1$ must be traversed in order to reach the ``lower'' headland path, before proceeding to $Z(\tau^\text{last})$. Instead, if $\xi_{\tilde{k}} < \xi_{0} \leq \xi_{\tilde{k}+1}$ for $\tilde{k}\in\{2,\dots,\lfloor\frac{N}{2}\rfloor\}$, then either lane $\tilde{k}$ or $\tilde{k}-1$ must be traversed. All these scenarios imply an initial movement towards \emph{negative} $\xi$-direction despite $\xi(\tau^\text{last})>\xi_0$. However, since lane $\tilde{k}$ or $\tilde{k}-1$ are the immediate \emph{next} and the \emph{second next} lane to $Z_0$, the corresponding detour with respect to a movement monotonically increasing from $\xi_0$ towards $\xi(\tau^\text{last})$  is always very small. For  $Z_0\in \mathcal{Z}_0^{(2)}$, such a detour does also not occur.
\end{remark}

\subsection{Path Planning based on the Circular Pattern -- CIRC$^\star$ \label{subsec_CIRCstar}}
 
CIRC$^\star$ is a variation of CIRC when modifying the method for headland path coverage. As will be shown, CIRC$^\star$ is the preferred method for both single-run and partial field coverage. First, a remark is made about path planning for \emph{single-run} field coverage based on the methods according to ABp and CIRC, see also Fig. \ref{fig_MEANDER} and \ref{fig_CIRCULAR}. Under the assumption of an initial \emph{uninterrupted} headland path traversal, ABp is preferred over CIRC with respect to path length minimisation. Some headland edges are traversed less frequently. Specifically, for CIRC every second ``lower'' headland edge is traversed three times:  once along the headland path traversal, and two times according to the path planning pattern of Fig. \ref{fig_CIRCULAR_1unit}. Similarly, also every second ``upper'' headland edge is traversed three times: once along the headland path, once according to the pattern of Fig. \ref{fig_CIRCULAR_1unit}, and once after traversal of the $N$th lane when returning to $Z_0$. This frequent traversal of the same edges is suboptimal. However, when dropping the assumption of an initial uninterrupted headland path traversal, an optimal field coverage path can be constructed based on the pattern of Fig. \ref{fig_CIRCULAR_1unit}. Specifically, the headland path is covered as a byproduct of concatenations of the proposed circular pattern. Traversing theses concatenations, every second ``upper'' headland edge is not yet covered. However, after traversal of the final lane, all these edges can be covered when returning to $Z_0$ along the ``upper'' headland path. This method is referred to as CIRC$^\star$ and represents the optimal field coverage method since every edge is covered \emph{at most} twice. For an even $N$, the set of edges that are traversed twice is confined to headland segments. For an odd $N$, the two edges $e_{N,2N+2}$ and $e_{2N+2,2N}$ are additionally traversed twice. See also Fig. \ref{fig_CIRCULARplus} for illustration. 
\begin{proposition}
For field shapes with uninterrupted lanes when aligned in a rotated coordinate frame as displayed in Fig. \ref{fig_2FieldShapes}, CIRC$^\star$ provides the single-run field coverage path plan of minimal total path length.
\end{proposition}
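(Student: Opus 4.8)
The plan is to recognise single-run coverage as a \emph{route inspection} (Chinese postman) problem on the transition graph of Fig.~\ref{fig_IllustrNotation} and then to match the resulting lower bound with the length delivered by CIRC$^\star$. First I would fix the graph $G$ whose nodes are the transition points $Z_i$ together with the corner nodes $Z_{2N+1},Z_{2N+2}$, the entrance $Z_0$ and the exit $Z_e$, and whose edges are the $N$ interior lanes and the headland segments joining consecutive nodes, each edge weighted by its length $\ell(e)$. Single-run coverage is then exactly a walk from $Z_0$ to $Z_e$ that traverses every edge of $G$ at least once, and its total length equals $\sum_e \ell(e)$ plus the combined length of the edges traversed a second or further time. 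Minimising total path length is therefore equivalent to minimising this surplus.

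For the lower bound I would invoke the standard parity argument. In any covering walk every node other than the two endpoints $Z_0,Z_e$ is entered and left equally often, so the number of edge-traversals incident to it is even; hence at each node whose degree in $G$ is odd at least one edge must be repeated. Reading off degrees from Fig.~\ref{fig_IllustrNotation}, the corner and entrance/exit nodes have degree two, whereas every lane endpoint $Z_1,\dots,Z_{2N}$ has degree three and is thus odd. The repeated edges must therefore form a $T$-join for the set $T$ of these $2N$ odd nodes (corrected at the free endpoints $Z_0,Z_e$), and the surplus is bounded below by the weight of a minimum such join. Crucially, this bound holds for \emph{every} feasible single-run plan, in particular for every plan respecting the compacted-area minimisation constraints, since those constraints only shrink the set of admissible walks.

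The core step is to identify this minimum $T$-join explicitly. Listing the odd nodes in the cyclic order in which the headland visits them, namely $Z_1,\dots,Z_N$ along the ``lower'' headland and $Z_{2N},\dots,Z_{N+1}$ along the ``upper'' headland, I would argue that a minimum-weight pairing matches cyclically adjacent odd nodes and repeats the short headland segment between them, rather than repeating a full lane. This is where the geometry enters: the decisive inequality is that a headland segment separating two neighbouring lane endpoints is no longer than a lane, which holds for the field shapes of Fig.~\ref{fig_2FieldShapes} in which lanes span the field. For even $N$ the alternating headland segments pair all $2N$ odd nodes, so the minimum join is confined to headland edges; for odd $N$ the cyclic alternation fails to close and the cheapest repair additionally repeats the two connector edges $e_{N,2N+2}$ and $e_{2N+2,2N}$ on the right-hand side. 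I expect the bookkeeping of this even/odd distinction, together with the absorption of two odd nodes by the free endpoints $Z_0$ and $Z_e$, to be the main obstacle, since the precise optimal pairing depends on the position of $Z_0$ within $\mathcal{Z}_0^{(1)}$ or $\mathcal{Z}_0^{(2)}$.

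Finally I would close the argument by matching. By the construction of CIRC$^\star$ established just before the statement, the plan traverses each of the $N$ lanes exactly once, covers the entire headland, and repeats precisely the alternating headland segments (together with $e_{N,2N+2},e_{2N+2,2N}$ when $N$ is odd). These repeated edges form exactly a minimum $T$-join, so the length of CIRC$^\star$ equals $\sum_e \ell(e)$ plus the minimum possible surplus, i.e.\ it attains the lower bound. Since CIRC$^\star$ is itself an admissible walk under the compacted-area constraints, and the lower bound applies to all covering walks, CIRC$^\star$ is of minimal total path length, which completes the proof.
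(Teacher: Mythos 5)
Your proposal follows essentially the same route as the paper's own proof, which also casts single-run coverage as an edge-covering (route inspection) problem, augments the transition graph by pairing odd-degree nodes at minimal added length to obtain an Eulerian graph, and observes that CIRC$^\star$ realises the resulting Eulerian tour. You in fact supply more detail than the paper does --- the explicit parity lower bound, the identification of the minimum pairing with alternating headland segments (plus $e_{N,2N+2}$, $e_{2N+2,2N}$ for odd $N$), and the geometric inequality that headland segments are shorter than lanes --- all of which is consistent with the paper's construction of CIRC$^\star$.
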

\begin{proof}
For field coverage, all edges of the transition graph must be covered at least once. Therefore, for our setting with odd vertices, a transition graph extension is required by duplicating some edges until an \emph{Eulerian graph}, see \cite{bondy1976graph}, is generated. This augmentation can be conducted in a minimal path length manner by finding corresponding odd-degree node-pairings. Then, an \emph{Eulerian tour} on this graph guarantees traversal of all edges. CIRC$^\star$ provides such Eulerian tour.
\end{proof}
This discussion is likewise relevant for \emph{partial} field coverage, since its overall path length is just composed of the field coverage path length plus the summed distances from returning to the depot and when resuming work in the field. The concepts for returning to the depot and resuming work in the field are identical for CIRC and CIRC$^\star$. Both are based on the same path planning pattern displayed in Fig. \ref{fig_CIRCULAR_1unit}. However, they differ in their method of headland path coverage. This implies different on/off switching sequences for the nozzles of an automatic section control (ASC) system, see~\cite{batte2006economics} and \cite{luck2010potential}. Algorithms 3 and 4 similarly apply for CIRC$^\star$. To summarise, for overall path length minimisation and adoption of the modified method for headland traversal, it is proposed to i) use the method according to CIRC$^\star$ for generation of the field coverage path plan, and ii) conduct the returns to the depot for refilling as discussed. For the field shapes under consideration, this method is optimal for \emph{both} single-run field coverage and partial field coverage.

\begin{figure}
\begin{subfigure}[b]{\linewidth}
\centering%
\begin{tikzpicture}
\draw [black,-{Latex[scale=1.0]}] plot [rounded corners=0.25cm] coordinates {(0.5,-4)(-0.8,-4)(-0.8,-5.9)(0.8,-5.9)(0.8,-4.1)(0,-4.1)(0,-5.8)
(2.4,-5.9)(2.4,-4.1)(1.6,-4.1)(1.6,-5.8)
(4.0,-5.9)(4.0,-4.1)(3.2,-4.1)(3.2,-5.8)
(5.6,-5.87)(5.6,-4.1)(4.8,-4.1)(4.8,-5.8)
(5.65,-5.87)(5.65,-4.05)(4.3,-4.05)
};
\draw[fill=black] (0.5,-4) circle (1pt); 
\node[color=black] (a) at (0.5,-3.7) {$Z_0$};
\draw[dotted,blue,line width=1pt] (2.4, -6.2) -- (2.4, -3.8);
\draw [black,-{Latex[scale=1.0]}] plot [rounded corners=0.25cm] coordinates { (-0.8,-4.5)(-0.8,-5)
};
\node[color=black] (a) at (4.8, -3.7) {towards $Z_0$};
%
\end{tikzpicture}
\caption{Odd $N$.}
\end{subfigure}\\[0pt]
\begin{subfigure}[b]{\linewidth}
\centering%
\begin{tikzpicture}
\draw [black,-{Latex[scale=1.0]}] plot [rounded corners=0.25cm] coordinates {
(8,-4)
(6.7,-4)(6.7,-5.9)(8.3,-5.9)(8.3,-4.1)
(7.5,-4.1)(7.5,-5.8)(9.9,-5.9)(9.9,-4.1)
(9.1,-4.1)(9.1,-5.8)(11.5,-5.9)(11.5,-4.1)
(10.7,-4.1)(10.7,-5.8)(13.1,-5.9)(13.1,-4.1)
(12.3,-4.1)(12.3,-5.8)(13.9,-5.9)
(13.9,-4)(11.9,-4)
};
\draw[fill=black] (8,-4) circle (1pt); 
\node[color=black] (a) at (8,-3.7) {$Z_0$};
\draw[dotted,blue,line width=1pt] (10.3, -6.2) -- (10.3, -3.8);
\draw [black,-{Latex[scale=1.0]}] plot [rounded corners=0.25cm] coordinates { (6.7,-4.5)(6.7,-5)
};
\node[color=black] (a) at (12.45, -3.7) {towards $Z_0$};
%
\end{tikzpicture}
\caption{Even $N$.}
\end{subfigure}
\caption{CIRC$^\star$. There is only a distinction between two cases: odd and even $N$. For CIRC$^\star$, the path planning is identical for $\mathcal{Z}_0^{(1)}$ and $\mathcal{Z}_0^{(2)}$ as defined in Fig. \ref{fig_CIRCULAR}. Thus, for CIRC$^\star$ the set of admissible field entrances within the normalised coordinate system is $\mathcal{Z}_0 = \mathcal{Z}_0^{(1)}\cup \mathcal{Z}_0^{(2)}$. An exemplatory $Z_0\in\mathcal{Z}_0$ is shown. The field coverage path is displayed as not closed in order to visualise the manner in which the headland path is traversed. The blue dotted lines indicate the $\frac{\xi_1 + \xi_N}{2}$-coordinate, respectively.}
\label{fig_CIRCULARplus}
\end{figure}
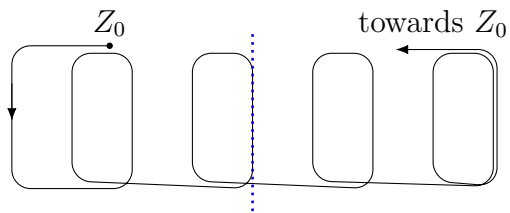
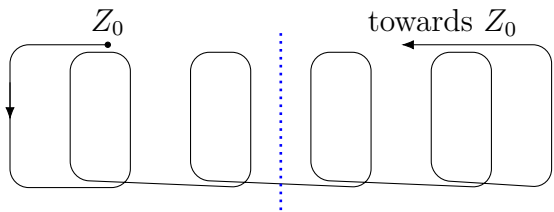

\begin{figure*}
\vspace{0.3cm}
\begin{subfigure}[b]{\linewidth}
\centering%
\begin{tikzpicture}
\draw [black] plot [rounded corners=0.25cm] coordinates { (0.4,2)(-0.8,2)(-0.8,0)(13.6,0)(13.6,2)(0.45,2)
(0,2)(0,0)(0.8,0)(0.8,2)(1.6,2)(1.6,0)
(2.4,0)(2.4,2)(3.2,2)(3.2,0)
(4.0,0)(4.0,2)(4.8,2)(4.8,0)
(5.6,0)(5.6,2)(6.4,2)(6.4,0)
(7.2,0)(7.2,2)(8,2)(8,0)
(8.8,0)(8.8,2)(9.6,2)(9.6,0)
(10.4,0)(10.4,2)(11.2,2)(11.2,0)
(12,0)(12,2)(12.8,2)(12.8,0)
(12.4,0)
};
\draw[fill=black] (0.55,2) circle (1pt); 
\node[color=black] (a) at (0.55, 2.3) {$Z_0$};
%
\draw [black,-{Latex[scale=1.0]}] plot [rounded corners=0.25cm] coordinates { (-0.8,1.5)(-0.8,1)
};
\draw[fill=black] (3.2,0.65) circle (1pt); 
\node[color=black] (a) at (2.8, 0.65) {$Z(t)$};
\draw [draw=blue!50,draw opacity=0.5, line width=4pt,-{Latex[scale=1.0]}] plot [rounded corners=0.25cm] coordinates { (3.2,0.65)(3.2,0)(12.8,0)(12.8,2)(0.55,2)};
\end{tikzpicture}
\caption{ABp.}
\end{subfigure}\\%
\begin{subfigure}[b]{\linewidth}
\centering%
\begin{tikzpicture}
\draw [black] plot [rounded corners=0.25cm] coordinates { (0.4,-3.25)(-0.8,-3.25)(-0.8,-5.25)(13.6,-5.25)(13.6,-3.25)(0.45,-3.25)(0,-3.25)
(0,-5.25)(1.6,-5.25)(1.6,-3.25)(0.8,-3.25)(0.8,-5.25)
(1.6,-5.25)(3.2,-5.25)(3.2,-3.25)(2.4,-3.25)(2.4,-5.25)
(3.2,-5.25)(4.8,-5.25)(4.8,-3.25)(4,-3.25)(4,-5.25)
(4.8,-5.25)(6.4,-5.25)(6.4,-3.25)(5.6,-3.25)(5.6,-5.25)
(6.4,-5.25)(8,-5.25)(8,-3.25)(7.2,-3.25)(7.2,-5.25)
(8,-5.25)(9.6,-5.25)(9.6,-3.25)(8.8,-3.25)(8.8,-5.25)
(8.8,-5.25)(11.2,-5.25)(11.2,-3.25)(10.4,-3.25)(10.4,-5.25)
(10.4,-5.25)(12.8,-5.25)(12.8,-3.25)(12,-3.25)(12,-5.25)
(12.4,-5.25)
};
\draw[fill=black] (0.55,-3.25) circle (1pt); 
\node[color=black] (a) at (0.55, -2.95) {$Z_0$};
%
\draw [black,-{Latex[scale=1.0]}] plot [rounded corners=0.25cm] coordinates { (-0.8,-3.75)(-0.8,-4.25)
};
\draw[fill=black] (3.2,-4.6) circle (1pt); 
\node[color=black] (a) at (2.8, -4.6) {$Z(t)$};
\draw [draw=green,draw opacity=0.5, line width=4pt,-{Latex[scale=1.0]}] plot [rounded corners=0.25cm] coordinates { (3.2,-4.6)(3.2,-3.25)(0.55,-3.25)};
\end{tikzpicture}
\caption{CIRC.}
\end{subfigure}
\caption{Visualisation of a scenario for the example of Section \ref{subsec_parametricEx}. The planned paths for a return from $Z(t)$ to $Z_0$ according to ABp and CIRC are compared. Location $Z(t)$ is identical for both plots (a) and (b). However, because of the planned paths according ABp and CIRC, the initial heading direction along the initial interior lane is different.}
\label{fig_Ex1}
\end{figure*}

\begin{figure*}[t]
\vspace{0.3cm}
\begin{subfigure}[b]{\linewidth}
\centering%
\begin{tikzpicture}
\draw [black] plot [rounded corners=0.25cm] coordinates { (0.4,2)(-0.8,2)(-0.8,0)(13.6,0)(13.6,2)(0.45,2)
(0,2)(0,0)(0.8,0)(0.8,2)(1.6,2)(1.6,0)
(2.4,0)(2.4,2)(3.2,2)(3.2,0)
(4.0,0)(4.0,2)(4.8,2)(4.8,0)
(5.6,0)(5.6,2)(6.4,2)(6.4,0)
(7.2,0)(7.2,2)(8,2)(8,0)
(8.8,0)(8.8,2)(9.6,2)(9.6,0)
(10.4,0)(10.4,2)(11.2,2)(11.2,0)
(12,0)(12,2)(12.8,2)(12.8,0)
(12.4,0)
};
\draw[fill=black] (0.55,2) circle (1pt); 
\node[color=black] (a) at (0.55, 2.3) {$Z_0$};
%
\draw [black,-{Latex[scale=1.0]}] plot [rounded corners=0.25cm] coordinates { (-0.8,1.5)(-0.8,1)
};
\draw[fill=black] (4,0.65) circle (1pt); 
\node[color=black] (a) at (3.6, 0.65) {$Z(t)$};
\draw [draw=blue!50,draw opacity=0.5, line width=4pt,-{Latex[scale=1.0]}] plot [rounded corners=0.25cm] coordinates { (4,0.65)(4,2)(12.8,2)(12.8,0)(1.6,0)(1.6,2)(0.55,2)};
\end{tikzpicture}
\caption{ABp.}
\end{subfigure}\\%
\begin{subfigure}[b]{\linewidth}
\centering%
\begin{tikzpicture}
\draw [black] plot [rounded corners=0.25cm] coordinates { (0.4,-3.25)(-0.8,-3.25)(-0.8,-5.25)(13.6,-5.25)(13.6,-3.25)(0.45,-3.25)(0,-3.25)
(0,-5.25)(1.6,-5.25)(1.6,-3.25)(0.8,-3.25)(0.8,-5.25)
(1.6,-5.25)(3.2,-5.25)(3.2,-3.25)(2.4,-3.25)(2.4,-5.25)
(3.2,-5.25)(4.8,-5.25)(4.8,-3.25)(4,-3.25)(4,-5.25)
(4.8,-5.25)(6.4,-5.25)(6.4,-3.25)(5.6,-3.25)(5.6,-5.25)
(6.4,-5.25)(8,-5.25)(8,-3.25)(7.2,-3.25)(7.2,-5.25)
(8,-5.25)(9.6,-5.25)(9.6,-3.25)(8.8,-3.25)(8.8,-5.25)
(8.8,-5.25)(11.2,-5.25)(11.2,-3.25)(10.4,-3.25)(10.4,-5.25)
(10.4,-5.25)(12.8,-5.25)(12.8,-3.25)(12,-3.25)(12,-5.25)
(12.4,-5.25)
};
\draw[fill=black] (0.55,-3.25) circle (1pt); 
\node[color=black] (a) at (0.55, -2.95) {$Z_0$};
%
\draw [black,-{Latex[scale=1.0]}] plot [rounded corners=0.25cm] coordinates { (-0.8,-3.75)(-0.8,-4.25)
};
\draw[fill=black] (4,-4.6) circle (1pt); 
\node[color=black] (a) at (3.6, -4.6) {$Z(t)$};
\draw [draw=green,draw opacity=0.5, line width=4pt,-{Latex[scale=1.0]}] plot [rounded corners=0.25cm] coordinates { (4,-4.6)(4,-5.25)(4.8,-5.25)(4.8,-3.25)(0.55,-3.25)};
\end{tikzpicture}
\caption{CIRC.}
\end{subfigure}
\caption{Visualisation of a scenario for the example of Section \ref{subsec_parametricEx}. The planned paths for a return from $Z(t)$ to $Z_0$ according to ABp and CIRC are compared. Location $Z(t)$ is identical for both plots (a) and (b). However, because of the planned paths according ABp and CIRC, the initial heading direction along the initial interior lane is different.}
\label{fig_Ex2}
\end{figure*}

\section{Illustrative Examples\label{sec_IllustrativeEx}}

\subsection{Parametric Example\label{subsec_parametricEx}}

\begin{table}[t]
\vspace{-0.6cm}
\centering
\begin{small}
\begin{tabular}{l|l}
\hline
\rowcolor[gray]{0.8} \multicolumn{2}{c}{\hspace{-0.7cm}Headland path}\\
\hline\hline
\rowcolor[gray]{0.93} from $Z(t)\in e_{j,k}$ to $Z_0$ & $\Delta D_{\text{ABp},\text{CIRC}}$\\[1pt] 
\hline
$j\in\{0,1,N+1,2N+1\}$ & $2(N-3)W_0$ \\
$1<j<N$ and $j$ even & $2(N-j-1)W_0$  \\
$1<j<N$ and $j$ odd & $2(N-j-2)W_0$  \\  
otherwise & $0$ \\[2pt]
\hline
\rowcolor[gray]{0.93} from $Z_0$ to $Z(\tau^\text{last})\in e_{j,k}$ & $\Delta D_{\text{ABp},\text{CIRC}}$\\[1pt] 
\hline
$3 \leq j \leq  N,~j=2N+2$& $ -2q_{l} $ \\
$j=2N+1$ & $0$ \\
$j\geq N+2$ and $j$ even & $-2q_{l} + (4N-2j)W_0-2R$ \\
$j\geq N+2$ and $j$ odd & $-2q_{l} + (4N-2j-2)W_0- 2R$ \\
otherwise & $0$ \\[2pt]
\hline
\rowcolor[gray]{0.8} \multicolumn{2}{c}{\hspace{-0.9cm}Interior lane}\\
\hline\hline
\rowcolor[gray]{0.93} from $Z(t)\in e_{j,N+j}$ to $Z_0$ & $\Delta D_{\text{ABp},\text{CIRC}}$\\[1pt] 
\hline
lane $j=1$ & $2(N-3)W_0$ \\
even lane $j\geq 2$ & $2(1-p)H + 2(N-j-1)W_0 -2R$ \\
odd lane $j\geq 3$ & $2pH + 2(N-j)W_0 -2R$ \\[2pt]
\hline
\rowcolor[gray]{0.93} from $Z_0$ to $Z(\tau^\text{last})\in e_{j,N+j}$ & $\Delta D_{\text{ABp},\text{CIRC}}$\\[1pt] 
\hline
lane $j=1$ & $0$ \\
lane $j=2$ & $-2(1-p)H-2W_0+2R$ \\
even lane $j\geq 4$ & $-2(1-p)H - 2W_0 - 2q_{l}$\\
odd lane $j\geq 3$ & $- 2pH -2q_{l} $ \\[2pt]
\hline
\end{tabular}
\end{small}
\caption{Path length differences for the example of Section \ref{subsec_parametricEx} with odd $N$ and a field entrance $Z_0$ located between the first and second interior lane. For illustration, see also Fig. \ref{fig_Ex1}. It is distinguished between edges along the headland path and edges representing interior lanes. For the former case, index $k$ is determined by the headland path and notation of Fig. \ref{fig_IllustrNotation}. For the latter case, the heading direction of a vehicle along an edge $e_{j,N+j}$ varies for ABp and CIRC according to their field coverage path plans, see Fig. \ref{fig_MEANDER} and \ref{fig_CIRCULAR}. The results for all edges of the entire path network are reported.}
\label{tab_ReturnResume}
\end{table}

\begin{table}
\centering
\begin{small}
\begin{tabular}{l|l}
\hline
\rowcolor[gray]{0.93} Method $m$ & $\Delta D_{\text{ABp},m}$  \\[1pt] 
\hline
CIRC & $-(N-1)W_0$ \\
CIRC$^\star$ & $(N-3)W_0$  \\ 
\hline
\end{tabular}
\end{small}
\caption{Single-run field coverage path lengths for the illustrative example in Section \ref{subsec_parametricEx} with odd $N$. The method according to CIRC has a path length that is proportional to the total number of lanes $N$ \emph{longer} than ABp. In contrast, the method according to CIRC$^\star$ is similarly \emph{shorter} and also scaling linearly with $N$. CIRC$^\star$ and CIRC are both based on the same path planning pattern. However, their method of covering the headland path differs.}
\label{tab_FieldCoverageLength1Run}
\end{table}

For illustration, a rectangular field shape with an odd number of lanes $N$ and a field entrance according to the $\mathcal{Z}_0^{(1)}$-type is considered. $Z_0$ is located between the first and second interior lane. See Fig. \ref{fig_Ex1}, which illustrates one specific scenario discussed further below. The path length differences between ABp and CIRC, which both assume an initial headland path traversal, are reported in Table \ref{tab_ReturnResume} by $\Delta D_{\text{ABp},\text{CIRC}}$. All locations along the entire path network for both the returning to $Z_0$ and the resumption of work at $Z(\tau^\text{last})$ are considered. All locations can be considered since parametric results coincide for locations along the same edges. The nominal distance between ``upper'' and ``lower'' headland path and inter-lane spacing (the machine operating width) are denoted $H_0$ and $W_0$, respectively. Accounting for the turning radius $R$, $H=H_0 - 2R + 2C$ with quarter circle path length $C=\frac{R\pi}{2}$, and $W = W_0-2R$ are defined. The fraction along an interior lane is indicated by $p\in[0,1]$. For example, in Fig. \ref{fig_Ex1} and \ref{fig_Ex2}, $p=0.25$. The lane immediately neighboring $Z_0$ is indicated by $j_l$ with $\xi_{j_{l}}\leq \xi_0< \xi_{j_{l}+1}$. Accordingly, lane $j_{p}$ indicates $\xi_{j_{p}} = \xi_{j_{l}+1}$. Furthermore, $q_l = \xi_0 - \xi_{j_l}$ and $q_p = \xi_{j_p} - \xi_{0}$ are defined such that $q_l + q_{p} = W_0$. 

Several observations can be made from Table \ref{tab_ReturnResume}. First, assuming $p<0.5$ and $2(1-p)H-2R>0$, the largest spread $\Delta D_{\text{ABp},\text{CIRC}}$ is observed for the case of returning from lane $j=2$ to $Z_0$. It results in $\Delta D_{\text{ABp},\text{CIRC}} = 2(1-p)H + 2(N-3)W_0 - 2R$, which grows linearly with the number of lanes $N$. Second, assuming $p<0.5$, the smallest spread is achieved along even lanes with $j>=4$ when resuming work. It results in $\Delta D_{\text{ABp},\text{CIRC}} = -2(1-p)H - 2W_0 - 2q_{l}$, which is negative and therefore indicates a shorter path for ABp. The equivalent setting for the resumption of work along an odd lane $j\geq 3$ is $\Delta D_{\text{ABp},\text{CIRC}} = - 2pH -2q_l $. Third, this motivates guidelines for optimal operation of CIRC with respect to ABp for path length minimisation. Ideally, $q_l\rightarrow 0$ which indicates that $Z_0$ is to be located very close to the first interior lane. Here, the limit-operator is denoted by '$\rightarrow$'. Ideally, a return-command is triggered along a lane with heading towards positive $\eta$ according to the path plan for CIRC and $p\rightarrow 0$. Ideally, the agricultural vehicle permits a small turning radius $R$ and an operating width $W_0>R$. Then, the \emph{only} setting for which the path length for CIRC is worse than for ABp is along lane $N$ (the ultimate lane), which results in $\Delta D_{\text{ABp},\text{CIRC}}\rightarrow -2R$. This is a very small shortcoming since turning radii of agricultural vehicles typically are small (e.g., $R\approx 6$m). In all other settings, i.e., along the remaining entire path network and under the assumptions above, CIRC yields shorter paths for the return to $Z_0$ and the resumption of work in the field. 

In two different scenarios, Fig. \ref{fig_Ex1} and \ref{fig_Ex2} visualise the planned paths for ABp and CIRC for a return from $Z(t)$ to $Z_0$. The two scenarios differ in that $Z(t)$ is located along an odd and even lane, respectively. The assumed vehicle locations upon triggering the return command are identical for ABp and CIRC. However, because of the characteristic path planning for both methods, the vehicle is heading towards different directions along the lanes, respectively. The two scenarios are meant to illustrate the following. First, for ABp the disadvantage of always having to travel until the $N$th lane is made apparent. Second, comparing Fig. \ref{fig_Ex1} and \ref{fig_Ex2}, the benefit of triggering the return command for CIRC on a lane with heading direction towards positive $\eta$ can be observed. The non-working distance, which represents the entire path from $Z(t)$ to $Z_0$, is much shorter for the former scenario. See also Fig. \ref{fig_Ex1and2_backtoPos} for the corresponding resumption of application work.

Table \ref{tab_FieldCoverageLength1Run} indicates the path length differences between the methods ABp, CIRC and CIRC$^\star$ for \emph{single-run} field coverage. CIRC$^\star$ differs from CIRC in its method to handle the headland coverage. ABp and CIRC proceed \emph{sequentially}. Before covering any interior lane, they first cover the headland path entirely. In contrast, CIRC$^\star$ combines lane and headland coverage \emph{simultaneously} as outlined in Section \ref{subsec_CIRCstar}. As Table \ref{tab_FieldCoverageLength1Run} illustrates, ABp is preferable over CIRC with respect to single-run field coverage path length. On the other hand, CIRC is preferable with respect to path length minimisation for returning to $Z_0$ and resuming of work within the field. Thus, dependent on the frequency of such return and resume states, the overall path length for ABp may still be shorter than for CIRC. This, however, changes drastically when employing the method according to CIRC$^\star$. Not only does CIRC$^\star$ enjoy the benefits of the path planning pattern in Fig. \ref{fig_CIRCULAR_1unit} for partial field coverage, it also significantly lowers (linear scaling in $N$) the single-run path length, see Table~\ref{tab_FieldCoverageLength1Run}. Note that even if the compacted area minimisation constraints are neglected, CIRC$^\star$ yields consistently shorter path lengths than ABp.

\subsection{Real-world Example\label{subsec_RealWorldEx}}

\begin{figure}
\vspace{0.1cm}
\centering
\includegraphics[width=14cm]{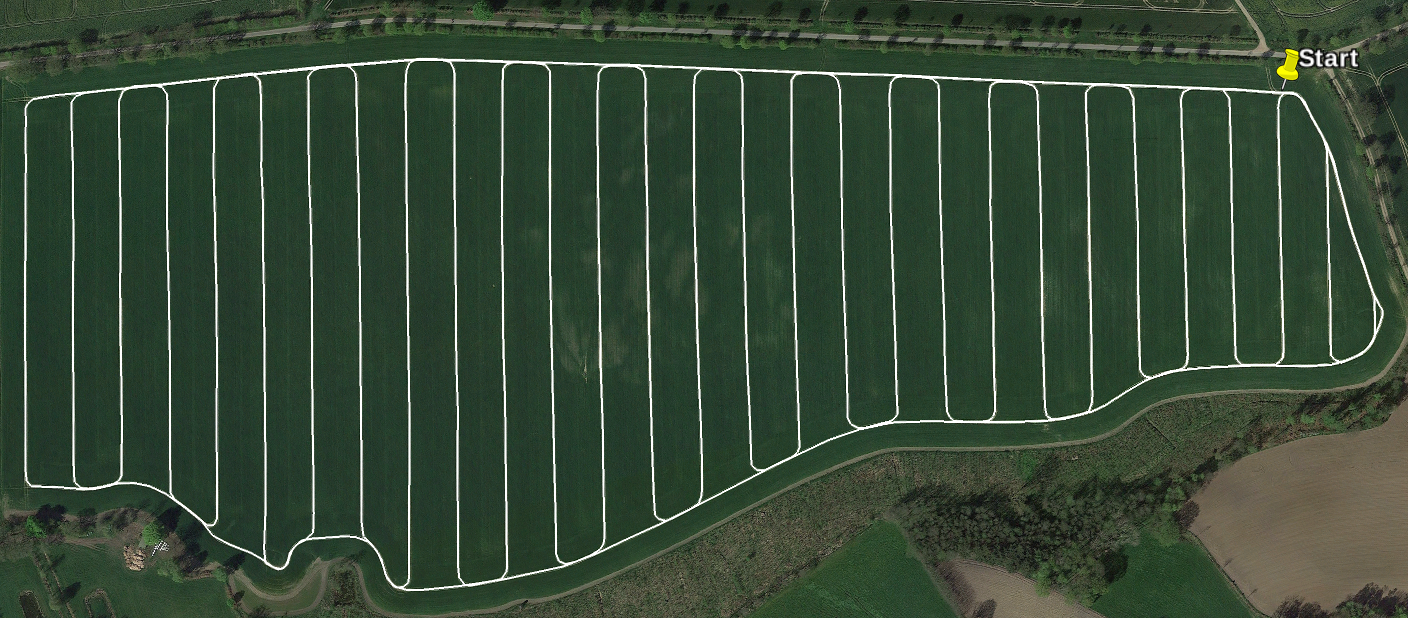}
\caption{Real-world field from Northern Germany ($53^\circ34'30.13''$N, $10^\circ 27'38.14''$E). The current working practice, ABp, is emphasised in white. The working width, turning radius and field size are 36m, 7m, and 32.2ha, respectively. Total field coverage path length according to Algorithm 1 is 12092m.}
\label{fig_field2white}
\end{figure}

\newlength\figureheight
\newlength\figurewidth
\setlength\figureheight{4cm}
\setlength\figurewidth{6cm}
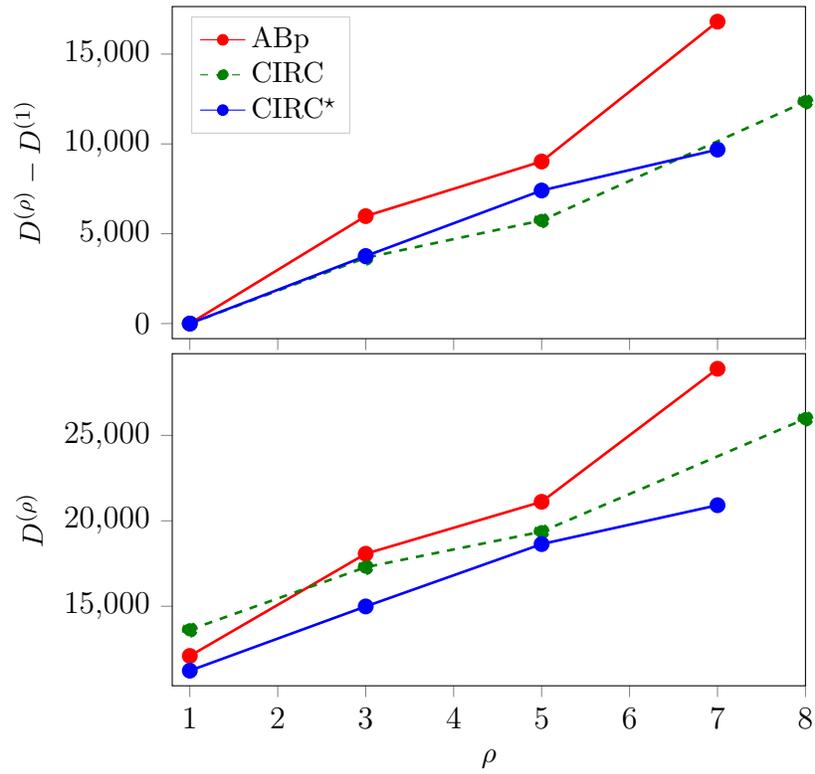
\begin{figure}
\centering
\vspace{0.1cm}
\begin{tikzpicture}

\begin{groupplot}[group style={group size=1 by 2,horizontal sep=1cm, vertical sep=0.2cm}]
\nextgroupplot[
xmin=0.8, xmax=8,
ymin=-840.25, ymax=17645.25,
width=10cm,
height=6cm,
xtick={0,1,2,3,4,5,6,7,8,9},
scaled y ticks = false,
ylabel={\small{$D^{(\rho)}-D^{(1)}$}},
ylabel shift = 0 pt,
xticklabels={},
tick align=outside,
tick pos=left,
x grid style={lightgray!92.026143790849673!black},
y grid style={lightgray!92.026143790849673!black},
legend entries={\small{ABp},\small{CIRC},\small{CIRC$^\star$}},
legend cell align={left},
legend style={at={(0.03,0.97)}, anchor=north west, draw=white!80.0!black}
]
\addlegendimage{mark=*, red}
\addlegendimage{mark=*, dashed,green!50.0!black}
\addlegendimage{mark=*, blue}
\addplot [red, mark=*,solid,line width=1pt, mark size=2.5, mark options={solid}]
table {%
1 0
3 5980
5 9021
7 16805
};
\addplot [green!50.0!black, dashed,line width=1pt, mark=*, mark size=2.5]
table {%
1 0
3 3662
5 5739
8 12350
};
\addplot [blue, mark=*, line width=1pt, mark size=2.5, mark options={solid}]
table {%
1 0
3 3759
5 7407
7 9683
};
\nextgroupplot[
xlabel={\small{$\rho$}},
ylabel={\small{$D^{(\rho)}$}},
xmin=0.8, xmax=8,
ymin=10346.65, ymax=29780.35,
width=10cm,
height=6cm,
tick align=outside,
tick pos=left,
scaled y ticks = false,
x grid style={lightgray!92.026143790849673!black},
y grid style={lightgray!92.026143790849673!black}
]
\addplot [red, mark=*, mark size=2.5,line width=1pt, solid, forget plot]
table {%
1 12092
3 18072
5 21113
7 28897
};
\addplot [green!50.0!black, mark=*,line width=1pt, mark size=2.5, dashed, forget plot]
table {%
1 13625
3 17287
5 19364
8 25975
};
\addplot [blue,solid, mark=*, mark size=2.5,line width=1pt, forget plot]
table {%
1 11230
3 14989
5 18637
7 20913
};
\end{groupplot}

\end{tikzpicture}
\caption{Results for the real-world field displayed in Fig. \ref{fig_field2white}. For each of ABp, CIRC, and CIRC$^\star$, four experiments were conducted, respectively. See Section \ref{subsec_RealWorldEx} for discussion.}
\label{figFieldEx}
\end{figure}


To also provide a quantitative example, the real-world field displayed in Fig. \ref{fig_field2white} is considered. For each of ABp, CIRC and CIRC$^\star$ four experiments are conducted. In the first experiment, path lengths for single-run field coverage are analysed. In the remaining three experiments it is assumed that after every 5000m, 2500m or 1750m of working distance, a return to $Z_0$ is required for refilling before the resumption of work, respectively. The number of field runs and the corresponding total accumulated path length is denoted by $\rho\geq 1$ and $D^{(\rho)}$, respectively. The results are in Fig. \ref{figFieldEx}. Several observations can be made.

First, CIRC$^\star$ performs best throughout all scenarios with respect to total path length minimisation. For single-run field coverage the overall path length is 7.1$\%$ shorter than the  common ABp-solution. For experiment 4, which requires field coverage in 7 parts, CIRC$^\star$ even yields a reduction of 27.6$\%$ in comparison to ABp.

Second, for single-run field coverage, CIRC is 12.7$\%$ longer than the ABp-baseline. However, for field coverage in three parts it already outperforms ABp, and for the fourth experiment, a reduction of 10.1$\%$ results.

Third, the total path length for CIRC$^\star$ is shorter than the CIRC-equivalent in all scenarios. However, for the third experiment, the excess path length w.r.t. the single-run path length, $D^{(5)}-D^{(1)}$, is longer for CIRC$^\star$, see Fig. \ref{figFieldEx}. This indicates that, for this specific experiment, the points where a return to $Z_0$ is triggered are here located more favourably for CIRC.

Fourth, for the last experiment, CIRC requires to divide the field coverage into 8 parts. In contrast, CIRC$^\star$ and ABp only require 7 field runs. Note that the total path length for CIRC is still 10.1$\%$ shorter than then ABp-baseline. Nevertheless, the disadvantage of the initial full headland coverage characteristic for CIRC is exposed. Because of this characteristic, CIRC has the longest total path length for single-run field coverage. As a consequence, for the given experimental setup, CIRC also requires the most field runs.

\section{Discussion\label{sec_discussion}}

In contrast to route planning methods that do not follow any predetermined pattern motif and instead freely optimise the field coverage path, the proposed method is \emph{pattern-based}. The focus on pattern-based path planning was motivated as follows. Freely optimised route plans for non-rectangularly shaped field contours typically result in unintuitive path plans and irregular sequences of lane traversals. While this does not matter for a fully autonomous robot, it is relevant for the case of vehicles driven by human operators. Namely, a navigation guidance application is required. Some practioners prefer to drive according to well-defined and repeatable patterns, even if some additional (limitedly small) detours may be incurred, rather than following a complex route planning. Furthermore, the acquisition of a routing system may also be costly to a farmer and requires access to his or her field data. In contrast, the proposed pattern-based approach can be applied immediately. Naturally, this does not preclude the possibility of also using the resulting paths as references in a two-layered auto-steering framework with reference tracking as the second layer (\cite{backman2012navigation},  \cite{plessen2017reference}).

For partial field coverage, the coordination of a mobile depot and an in-field operating vehicle becomes relevant. Two options are envisioned. The first option includes model-based a priori planning, where field coverage is preplanned by dividing it into partial field coverages. This method is required for the scheduling of the mobile depot. However, it requires a spray prescription map or a similar measure to predict the partial field coverage. An alternative second option is less model-dependent and therefore more practical. It does not preplan how to partition the entire field coverage. Thus, the mobile depot is called according to need, i.e., once the emptying of the storage tank is foreseeable. This method is particularly useful if support units can be summoned quickly, for example, because of a short traveling distance between a stationary depot and the field entrance. The agricultural vehicle is preferably summoned to $Z_0$ for refilling when it is currently traveling along a lane with heading direction towards positive $\eta$, see Fig. \ref{fig_Ex1and2_backtoPos}.

Finally, note that for the field shapes considered in this paper, i.e., with uninterrupted lanes when aligned in a rotated coordinate frame, the application of CIRC$^\star$ results in path length optimal single-run field coverage path plans. This is stressed to underline that the circular path planning pattern element of Fig. \ref{fig_CIRCULAR_1unit} is useful not only for partial but also for single-run field coverage. Hence and to summarize, for the considered field shapes CIRC$^\star$ is the preferred path planning method for \emph{both} single-run field coverage and field coverage in parts. This was further validated in the parametric as well as the real-world example from Sections \ref{subsec_parametricEx} and \ref{subsec_RealWorldEx}, where CIRC$^\star$ outperformed the ABp-baseline and CIRC with respect to total path length minimisation in all scenarios.

\section{Conclusion\label{sec_conclusion}}

Three path planning methods for partial field coverage were discussed: ABp, CIRC and CIRC$^\star$. Compacted area minimisation constraints and field contours that permit path planning based on concatenations of patterns were assumed. A normalised coordinate system description was discussed, which was derived from at most three linear transformations, a rotation and at most two coordinate reflections.

It was illustrated how the overall field coverage path length can be reduced significantly when covering the headland path as a byproduct of concatenations of the proposed pattern in Fig. \ref{fig_CIRCULAR_1unit}. It was stressed how this is relevant for both partial and single-run field coverage. Thus, it was emphasised how CIRC$^\star$ outperforms ABp for single-run field coverage due to its efficient way of covering the headland, thereby minimising non-working distance. At the same time, CIRC$^\star$ enjoys the favourable properties of the proposed circular pattern for partial field coverage. Thus, for the assumed field shapes, CIRC$^\star$ is the optimal path planning method for both single-run and partial field coverage and therefore the preferred method. These findings suggest to replace the currently widespread practice of in-field path planning based on the AB pattern by the method according to CIRC$^\star$.

Subject of future work may include the filling of storage tanks for weight minimisation and avoidance of soil compaction subject to partial field coverage routes. Furthermore, it may be analysed to what extent free-form optimised field coverage routes for arbitrarily shaped fields improve upon the CIRC$^\star$ method. Finally, the real-time prediction of processes in general, and of the storage tank fill-level dynamics in particular, is an exciting and very challening topic. It is prerequisite for fully cooperative field logistics based on time-scheduling. 

\section*{Acknowledgement}

The author would like to thank the two anonymous reviewers and the Associate Editor for the constructive comments, which greatly helped to improve this manuscript.

\bibliographystyle{model5-names}
\bibliography{mybibfile.bib}
\nocite{*}







\end{document}